%%%% ijcai25.tex

\typeout{IJCAI--25 Instructions for Authors}

% These are the instructions for authors for IJCAI-25.

\documentclass{article}
\pdfpagewidth=8.5in
\pdfpageheight=11in

% The file ijcai25.sty is a copy from ijcai22.sty
% The file ijcai22.sty is NOT the same as previous years'
\usepackage{ijcai25}

% Use the postscript times font!
\usepackage{times}
\usepackage{soul}
\usepackage{url}
\usepackage[hidelinks]{hyperref}
\usepackage[utf8]{inputenc}
\usepackage[small]{caption}
\usepackage{graphicx}
\usepackage{amsmath}
\usepackage{amsthm}
\usepackage{booktabs}
\usepackage{algorithm}
\usepackage[switch]{lineno}

\usepackage{mathtools}
\usepackage{pgfplots}
\usepackage{natbib}
\usepackage{xspace}
\usepackage{cleveref}
\usepackage{amsfonts}
\usepackage[noend]{algpseudocode}

% Comment out this line in the camera-ready submission
%\linenumbers

\urlstyle{same}

% the following package is optional:
%\usepackage{latexsym}

% See https://www.overleaf.com/learn/latex/theorems_and_proofs
% for a nice explanation of how to define new theorems, but keep
% in mind that the amsthm package is already included in this
% template and that you must *not* alter the styling.
\newtheorem{theorem}{Theorem}[section]

\newtheorem{proposition}[theorem]{Proposition}
\newtheorem{lemma}[theorem]{Lemma}

\theoremstyle{definition}
\newtheorem{definition}[theorem]{Definition}
\newtheorem{example}[theorem]{Example}

\renewcommand{\sc}{\textsc{sc}}

\DeclarePairedDelimiter{\set}{\{}{\}}
\DeclarePairedDelimiter\ceiling{\lceil}{\rceil} 
\DeclarePairedDelimiter\floor{\lfloor}{\rfloor} 

\newcommand{\FVRFull}{flexible-voter representation\xspace}
\newcommand{\FVR}{\text{FVR}\xspace}
\newcommand{\FVRMath}{\mathrm{FVR}\xspace}
\newcommand{\Ropt}{R_{\text{OPT}}\xspace}

\DeclareMathOperator*{\argmin}{arg\,min}
\DeclareMathOperator*{\E}{\mathbb{E}}

\pagestyle{plain} % added for convenience of seeing page numbers
\allowdisplaybreaks

% Following comment is from ijcai97-submit.tex:
% The preparation of these files was supported by Schlumberger Palo Alto
% Research, AT\&T Bell Laboratories, and Morgan Kaufmann Publishers.
% Shirley Jowell, of Morgan Kaufmann Publishers, and Peter F.
% Patel-Schneider, of AT\&T Bell Laboratories collaborated on their
% preparation.

% These instructions can be modified and used in other conferences as long
% as credit to the authors and supporting agencies is retained, this notice
% is not changed, and further modification or reuse is not restricted.
% Neither Shirley Jowell nor Peter F. Patel-Schneider can be listed as
% contacts for providing assistance without their prior permission.

% To use for other conferences, change references to files and the
% conference appropriate and use other authors, contacts, publishers, and
% organizations.
% Also change the deadline and address for returning papers and the length and
% page charge instructions.
% Put where the files are available in the appropriate places.

% PDF Info Is REQUIRED.

% Please leave this \pdfinfo block untouched both for the submission and
% Camera Ready Copy. Do not include Title and Author information in the pdfinfo section
\pdfinfo{
/TemplateVersion (IJCAI.2025.0)
}

\title{The Proportional Veto Principle for Approval Ballots}

\author{
Daniel Halpern$^1$
\and
Ariel D. Procaccia$^1$\and
Warut Suksompong$^2$\\
\affiliations
$^1$Harvard University\\
$^2$National University of Singapore
}

\begin{document}

\maketitle

\begin{abstract}
The proportional veto principle, which captures the idea that a candidate vetoed by a large group of voters should not be chosen, has been studied for ranked ballots in single-winner voting.
We introduce a version of this principle for approval ballots, which we call flexible-voter representation (FVR).
We show that while the approval voting rule and other natural scoring rules provide the optimal FVR guarantee only for some flexibility threshold, there exists a scoring rule that is FVR-optimal for all thresholds simultaneously.
We also extend our results to multi-winner voting.
\end{abstract}

\section{Introduction}

Voting is one of the most prominent approaches to collective decision-making and has been studied extensively in several disciplines, not least in social choice theory \citep{BF02,Zwic16}.
Some applications of voting require selecting a single winner, e.g., the president of a country or the venue for an event.
Other applications involve choosing multiple winners, such as members of a council, places to visit on a trip, or recipients of a research grant.

In several voting scenarios, a natural consideration is that groups of voters should have the right to veto candidates that they dislike, be it a politician whose views they disagree with or a venue they find unsuitable.
This intuition is formalized by the concept of the \emph{proportional veto core}, which was proposed by \citet{Moul81} and further studied by a number of authors \citep{Moul82,IK21,KK23,Pete23,KI24}.
The proportional veto core is defined for single-winner voting when each voter expresses her preference over candidates in the form of a strict ranking.
According to this principle, a candidate $a$ can be vetoed by a group of voters $T$ if there exists a sufficiently large set of candidates~$B$ such that every voter in $T$ prefers all candidates in $B$ to $a$---the required size of~$B$ depends on the size of $T$.
A significant result by \citet{Moul81} states that the proportional veto core is always non-empty, i.e., there always exists a candidate that is not vetoed by any group of voters.

While ranked ballots have been examined since the dawn of social choice theory, an alternative method for eliciting voter preferences, which has also received substantial attention in the literature, is via \emph{approval ballots} \citep{BF07,LS10}.
In an approval ballot, a voter may either approve or disapprove each candidate.
In addition to their simplicity, approval ballots are quite expressive, as voters are allowed to approve as many or as few candidates as they wish.
As \citet[pp.~2--3]{BP22} discussed, approval ballots arise naturally when preferences are dichotomous---for example, when they correspond to whether workers are available during various time slots, or whether hiring committee members consider candidates to be capable of performing a clearly defined task.
Approval ballots have been used in political contexts---including in the St. Louis, Missouri mayoral and Fargo, North Dakota municipal elections---as well as by professional groups such as the American Mathematical Society.

A tempting approach for applying the proportional veto principle to approval ballots is to convert the approval ballots into ranked ballots by breaking ties arbitrarily, and apply the result of \citet{Moul81} to obtain a candidate in the proportional veto core of the ranked ballots.
However, this approach may lead to patently undesirable outcomes.
For instance, consider four candidates $a,b,c,d$ and three voters who approve $\{b,c\},\{b,d\},\{c,d\}$, respectively.
One can convert these approval ballots into the ranked ballots where voter~$1$ ranks $b\succ c\succ a\succ d$, voter~$2$ ranks $b\succ d\succ a\succ c$, and voter~$3$ ranks $c\succ d\succ a\succ b$.
For these approval ballots, voter~$1$ vetoes $d$ for being last, and similarly voters~$2$ and $3$ veto $c$ and $b$, respectively, leaving $a$ as the only candidate in the proportional veto core.
However, the choice of $a$ is clearly suboptimal with respect to the original approval ballots, as it is approved by none of the voters.

To formalize the proportional veto principle for approval ballots, we introduce a concept of voter \emph{flexibility}, which corresponds to the fraction of candidates that a voter approves.
Intuitively, in single-winner voting, if a group of voters is sufficiently large and each voter in the group is sufficiently flexible, we would like to ensure that at least one voter in the group approves the chosen candidate.
Put differently, a sufficiently large and sufficiently flexible group has the power to veto a candidate that the entire group disapproves. 
While considering flexibility is natural in any setting where approval voting is employed, it may be especially useful in applications such as participatory budgeting, where voters sometimes participate to support a single project that aligns with a particular cause.
Indeed, rules that reward flexibility could encourage voters to approve more projects, lead to chosen projects with broader consensus, and provide greater legitimacy to the process.\footnote{We also remark that proportionality is a central guarantee behind the \emph{method of equal shares}, which has been used for participatory budgeting in Poland and Switzerland \citep{BFJP+24}.}
As another example, consider a group of countries deciding the location(s) for hosting an international event.
In this scenario, flexibility could help motivate the representatives of each country to vote beyond merely their own country or close allies.
We refer to our principle as \emph{\FVRFull (\FVR)}.

We highlight that \FVR is more robust than proportional veto for ranked ballots.
In particular, for a candidate $a$ to be vetoed by a group $T$, \FVR does not require every voter in~$T$ to approve all candidates in some set $B$ and disapprove $a$---instead, it only requires all voters in~$T$ to approve sufficiently many (but not necessarily the same) candidates and disapprove $a$.
We discuss this distinction further in \Cref{sec:additional-discussion}, but note here that \FVR is particularly suited to approval ballots, where voter flexibility can be defined naturally. 

\subsection{Overview of Results}
\label{sec:overview}

In \Cref{sec:single-winner}, we focus on single-winner voting.
For $s\in (0,1)$, we say that a voter is \emph{$s$-flexible} if she approves at least an $s$-fraction of the candidates.
For each voting rule $R$ and each $s$, we denote by $\FVRMath(R,s)$ the smallest value of $r$ such that whenever a group of voters constitutes more than an $r$-fraction of the voters and each voter in the group is $s$-flexible, at least one voter in the group approves the candidate chosen by $R$.
The lower $\FVRMath(R,s)$ is, the stronger the guarantee that the rule~$R$ provides for the threshold~$s$.

We first derive a general lower bound: for any rule $R$ and any $s$, it holds that $\FVRMath(R,s) \ge 1-s$.
Moreover, this bound is tight in the sense that for each $s$, the rule $R_{s\text{-threshold}}$, which chooses a candidate with the most approvals among $s$-flexible voters, achieves $\FVRMath(R_{s\text{-threshold}},s) = 1-s$.
However, as each rule $R_{s\text{-threshold}}$ is tailored to a specific value of~$s$, a rule with strong guarantees for all values of~$s$ simultaneously would be more desirable.
We show that the classic approval voting rule, which chooses a candidate with the most approvals overall, yields $\FVRMath(R,s) = \frac{1}{1+s}$ for all~$s$; this essentially matches the lower bound for $s$ close to $0$, but becomes further away as $s$ increases.
To achieve better guarantees for larger $s$, we consider placing more weight on voters with higher flexibility.
Specifically, we let the contribution of a voter to the score of each of her approved candidates be the voter's flexibility raised to the $p$-th power, where $p > 0$ is a given parameter.
For each $p$, we determine the exact \FVR of the $p$-power scoring rule---we find that it matches the lower bound at $s = \frac{p}{1+p}$, but not at other values of $s$.

From the aforementioned results, one may be tempted to believe that there is an inevitable trade-off between optimizing for different thresholds $s$.
Strikingly, we show that this is not the case: there is in fact a rule $\Ropt$ that is \emph{\FVR-optimal}, meaning that $\FVR(R,s) = 1-s$ for all $s$ simultaneously.
One may interpret this result as demonstrating that an approval-ballot analog of the proportional veto core is always non-empty.
Like $p$-power scoring rules, $\Ropt$ can be described as a scoring rule, with the scores chosen carefully.
We also establish a general theorem that can be used to determine \FVR guarantees of arbitrary scoring rules---this theorem allows us to characterize $\Ropt$ as the unique scoring rule (up to scaling) that satisfies \FVR-optimality.

In \Cref{sec:multi-winner}, we turn our attention to the more general setting of multi-winner voting, where the goal is to choose a set of $k$ candidates (called a \emph{committee}) from the $m$ given candidates; single-winner voting corresponds to $k=1$.
For $1\le t\le k$, we say that a voter \emph{$t$-approves} a committee if she approves at least $t$ candidates in the committee.
Unlike in the single-winner setting, it will be more convenient to state our bounds including their dependency on $m$ (rather than in the worst case over all $m$).
Specifically, we let $\FVRMath(R,s,t,m)$ be the smallest value of $r$ such that whenever a group of voters constitutes more than an $r$-fraction of the voters, each voter is $s$-flexible, and there are $m$ candidates, then at least one voter in the group $t$-approves the committee chosen by~$R$.

Similarly to single-winner voting, we provide a lower bound on $\FVRMath(R,s,t,m)$, which can be written in terms of a hypergeometric random variable; for any $s,t,m,k$, we present a rule for which this bound is tight.
Moreover, if $k$ and $t$ are fixed, we show that there exists a rule that yields the optimal guarantee for all $m$ and $s$.
This rule operates by iteratively choosing a candidate using a scoring rule; however, the weight assigned to each voter in this procedure depends nontrivially on the voter's preference and the candidates added thus far.
On the other hand, we prove that if we fix $s$ (and $k,m$), it may not be possible to achieve \FVR-optimality for different values of $t$ at the same time. 
We also demonstrate the incompatibility between \FVR-optimality and other notions of representation in the multi-winner literature.

\subsection{Additional Discussion}
\label{sec:additional-discussion}

As mentioned earlier, in single-winner voting, \FVR can be viewed as an approval-ballot analog of the proportional veto core for ranked ballots \citep{Moul81}.
However, in order for a group of voters to veto a candidate~$a$, \FVR does not require the voters in the group to commonly approve a set of candidates and disapprove $a$---instead, it is enough that these voters approve sufficiently many candidates and disapprove~$a$.
To highlight the difference that this distinction makes, note that strengthening the proportional veto core along the lines of our \FVR definition may lead to an empty core.
Indeed, if there are two voters with rankings $a\succ b\succ c$ and $c\succ b\succ a$ over three candidates, then this more demanding notion would rule out $c$ (as the first voter ranks it last), $a$ (as the second voter ranks it last), and $b$ (as both voters rank it second-last), leaving no viable candidate.\footnote{We formalize this in \Cref{app:PVC}.
Note that the proportional veto core does not rule out $b$, since the two voters do not agree on a candidate that is preferred to $b$.}
In the approval setting, \FVR takes advantage of the dichotomous nature of the preferences to achieve a stronger guarantee.

Another related concept is \emph{justified representation (JR)} \citep{ABCE+17}, which has received significant attention recently in the context of multi-winner (approval) voting.
JR captures the idea that if a group of voters is sufficiently large and all voters in the group approve a common candidate, then at least one candidate approved by some voter in the group should be selected.
While \FVR bears some similarity to JR, a major difference is that \FVR does not require voters in the group to approve any candidate in common, but only that these voters be sufficiently flexible.
Note that in single-winner voting, JR does not provide any meaningful guarantee, and neither do its various strengthenings \citep{ABCE+17,SELF+17,PPS21,BP23}.

\section{Single-Winner Voting}
\label{sec:single-winner}

\subsection{Preliminaries}

For each positive integer $t$, let $[t] := \{1,2,\dots,t\}$.
There is a set $N$ of $n$ voters and a set $M$ of $m$ candidates. 
Each voter~$i$ has an approval set $A_i \subseteq M$. 
An instance consists of $N$, $M$, and $(A_i)_{i\in N}$.
For a candidate $a \in M$, let $N_a := \set{i \in N \mid a \in A_i}$ be the set of voters who approve it. 
For a voter $i\in N$, let $f_i := |A_i|/m$ be the fraction of candidates that $i$ approves.
In single-winner voting, a rule chooses a single candidate from each instance.

For $s\in (0,1)$, we say that a voter~$i$ is \emph{$s$-flexible} if $f_i \ge s$.
Given a rule~$R$, we are interested in, for each $s\in (0,1)$, the smallest $r\in [0,1]$ (as a function of $s$) such that whenever a group of strictly more than $rn$ voters (i.e., more than an $r$-fraction of the voters) are all $s$-flexible, at least one voter in the group approves the candidate chosen by the rule.\footnote{If $s = 1$, the answer is clearly $r = 0$ regardless of the rule, so we do not consider this trivial case.}
We refer to this condition as the \emph{\FVR condition}, and denote this value of $r$ by $\FVRMath(R,s)$.
Note that $\FVRMath(R,s)$ is always well-defined.
Indeed, $r = 1$ satisfies the \FVR condition vacuously; if some $r$ satisfies the condition then every $r' \ge r$ does as well; and if for some $r$ it holds that every $r' > r$ satisfies the condition, then $r$ also satisfies the condition (because the condition is phrased as ``strictly more than $rn$ voters'').
The smaller the value $\FVRMath(R,s)$, the stronger the \FVR guarantee provided by the rule~$R$ for the threshold~$s$.

\begin{example}
Consider a rule $R$ that always returns the first candidate, regardless of the voters' approvals.
We claim that $\FVRMath(R,s) = 1$ for all $s\in (0,1)$, i.e., this rule offers the worst possible \FVR guarantees.
Indeed, for any $s\in (0,1)$, consider an instance where each of the $n$ voters approves all $m$ candidates except the first candidate, where $m > \frac{1}{1-s}$.
All voters are $s$-flexible but none of them approves the candidate chosen by $R$, so the \FVR condition holds only for $r = 1$.
\end{example}

\subsection{Lower Bound}

We first derive a lower bound on the \FVR guarantee achievable by any rule.

\begin{theorem}
\label{thm:singlewinner-lowerbound}
For any rule $R$ and any $s\in (0,1)$, we have $\FVRMath(R,s) \ge 1-s$.
Moreover, for each $s\in (0,1)$, there exists a rule $R$ such that $\FVRMath(R,s) = 1-s$.
\end{theorem}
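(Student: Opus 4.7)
The plan is to establish the lower bound via a symmetric ``cyclic'' construction, and to exhibit a matching upper bound using the rule $R_{s\text{-threshold}}$ that is explicitly described in \Cref{sec:overview}.

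For the lower bound, fix $s \in (0,1)$ and any rule $R$. I would show that for every $r < 1-s$, there exists an instance violating the \FVR condition at level $r$. Take $m$ large with $k := \lceil sm \rceil$, create $n = m$ voters, and let voter~$i$ approve the set $\{i+1, i+2, \ldots, i+k\} \pmod m$. Every voter approves a $k/m \ge s$ fraction of candidates, so all voters are $s$-flexible. By rotational symmetry, each candidate is approved by exactly $k$ voters and therefore disapproved by exactly $m-k$ voters. Whichever candidate $R$ selects on this instance, the set of $s$-flexible voters who disapprove it has size $m-k$, which is $> rm$ for all sufficiently large $m$ because $(m-k)/m \to 1-s$. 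Hence the \FVR condition fails at threshold $r$, giving $\FVRMath(R,s) \ge 1-s-\epsilon$ for every $\epsilon>0$, i.e., $\FVRMath(R,s) \ge 1-s$.

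For the matching upper bound, consider $R_{s\text{-threshold}}$, which chooses a candidate maximizing the number of approvals among $s$-flexible voters. Let $N_s := \{i \in N : f_i \ge s\}$ and let $a^\star$ be the chosen candidate. A double-counting argument gives
\[
\sum_{a \in M} |N_a \cap N_s| \;=\; \sum_{i \in N_s} |A_i| \;\ge\; sm \cdot |N_s|,
\]
so by averaging (pigeonhole) there is some candidate approved by at least $s|N_s|$ voters in $N_s$; since $a^\star$ is the maximizer, $|N_{a^\star} \cap N_s| \ge s|N_s|$. Thus the number of $s$-flexible voters who disapprove $a^\star$ is at most $(1-s)|N_s| \le (1-s)n$. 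Any group consisting strictly of more than $(1-s)n$ $s$-flexible voters therefore cannot be entirely contained in $N_s \setminus N_{a^\star}$, so some member of the group approves $a^\star$. This proves $\FVRMath(R_{s\text{-threshold}},s) \le 1-s$, matching the lower bound.

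I do not anticipate any genuinely hard step here; both directions are short. The one mild subtlety is handling the integrality of $sm$ in the lower-bound construction for arbitrary (possibly irrational) $s$, which is why I take $m$ large and use $k = \lceil sm \rceil$ so that $(m-k)/m$ approaches $1-s$ from below. The strictness of the inequality ``more than $rn$ voters'' in the definition of the \FVR condition aligns perfectly with this limiting argument, allowing the lower bound to be stated as $\FVRMath(R,s) \ge 1-s$ rather than a strict inequality.
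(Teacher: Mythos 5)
Your proposal is correct and follows essentially the same approach as the paper: the upper bound via $R_{s\text{-threshold}}$ and the averaging argument is identical, and your cyclic construction with $n=m$ is just a concrete (and exactly balanced) instantiation of the paper's ``distribute $\lceil sm\rceil$ approvals per voter as equally as possible'' instance. No gaps.
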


\begin{proof}
For the first statement, consider any rule $R$ and any $s\in (0,1)$, and take any $r < 1-s$.
It suffices to show that there exists an instance where a group of at least $rn$ voters are all $s$-flexible but do not approve the candidate chosen by $R$.
Consider an instance with sufficiently large $n$ and $m$ (to be made more precise later).
Assume that each of the $n$ voters approves exactly $\ceiling{sm}$ candidates (so all voters are $s$-flexible), and the voters distribute their approvals as equally as possible.\footnote{For example, we can let the voters distribute their approvals one voter after another. In each voter's turn, the voter approves $\ceiling{sm}$ candidates that have received the fewest approvals so far, breaking ties arbitrarily.}
Hence, each candidate receives at most $\left\lceil\frac{n\cdot\ceiling{sm}}{m}\right\rceil$ approvals.
Observe that
\begin{align*}
\left\lceil\frac{n\cdot\ceiling{sm}}{m}\right\rceil < \frac{n(sm+1)}{m}+1 = n\cdot\frac{sm+1}{m} + 1.
\end{align*}
As $m$ grows, $\frac{sm+1}{m}$ converges to $s$, so $n\cdot\frac{sm+1}{m} + 1$ converges to $sn+1$.
Since $s < 1-r$, we have $sn+1 < (1-r)n$ for sufficiently large $n$.
This means that for sufficiently large $n$ and $m$, each candidate is approved by at most $(1-r)n$ voters.
Consequently, no matter which candidate $R$ chooses, at least $rn$ voters disapprove it.

For the second statement, fix $s\in (0,1)$.
Consider the rule $R_{s\text{-threshold}}$ that chooses a candidate $a$ with the highest number of approvals among $s$-flexible voters, breaking ties arbitrarily; in particular, the rule ignores voters who are not $s$-flexible.
It suffices to show that whenever a group of more than $(1-s)n$ voters are all $s$-flexible, at least one voter in the group approves $a$.
Since each $s$-flexible voter approves at least an $s$-fraction of the candidates, the candidates are approved by at least an $s$-fraction of the $s$-flexible voters on average.
By definition of the rule $R_{s\text{-threshold}}$, the candidate $a$ is approved by at least an $s$-fraction of the $s$-flexible voters, which means that at most a $(1-s)$-fraction of the $s$-flexible voters disapprove $a$.
Since the number of $s$-flexible voters is at most $n$, this implies that at most $(1-s)n$ voters who are $s$-flexible disapprove $a$.
It follows that in any group of more than $(1-s)n$ voters who are $s$-flexible, at least one voter in the group approves $a$.
\end{proof}

While the rule $R_{s\text{-threshold}}$ in the proof of \Cref{thm:singlewinner-lowerbound} achieves the optimal \FVR guarantee for the corresponding flexibility threshold $s$, it is tailored to only one threshold $s$, which leads to unwanted effects---for instance, the rule completely ignores voters who are almost $s$-flexible.
As such, it would be more desirable to have a rule that yields good guarantees for several thresholds $s$ simultaneously.

\subsection{Approval Voting Rule}
\label{sec:approval}

We next analyze the classic \emph{approval voting rule}, which selects a candidate with the highest number of approvals, breaking ties arbitrarily.
For this rule, we determine the tight \FVR guarantee for every $s$.

\begin{theorem}
\label{thm:approval}
Let $R_{\emph{approval}}$ be the approval voting rule.
For each $s\in (0,1)$, we have $\FVRMath(R_{\emph{approval}},s) = \frac{1}{1+s}$.
\end{theorem}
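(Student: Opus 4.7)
The plan is to establish the upper bound $\FVRMath(R_{\text{approval}},s) \le \frac{1}{1+s}$ and the matching lower bound $\FVRMath(R_{\text{approval}},s) \ge \frac{1}{1+s}$ by two short counting arguments, in the spirit of \Cref{thm:singlewinner-lowerbound}.

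For the upper bound, I fix an instance, let $a$ denote the candidate chosen by $R_{\text{approval}}$, and consider an arbitrary group $T$ of $s$-flexible voters all of whom disapprove $a$. Since every voter in $T$ approves at least $sm$ candidates and none of them is $a$, the total number of approvals cast from $T$ onto $M \setminus \{a\}$ is at least $sm|T|$; pigeonhole then produces some $b \neq a$ receiving at least $\frac{sm|T|}{m-1}$ approvals just from $T$. Because $a$ maximizes total approvals and its approvals come only from $N \setminus T$, I obtain $n - |T| \ge \frac{sm|T|}{m-1}$, which rearranges to $|T| \le \frac{n(m-1)}{m(1+s)-1}$. A direct comparison shows $\frac{m-1}{m(1+s)-1} < \frac{1}{1+s}$ for all $m \ge 2$ and $s \in (0,1)$, so whenever $|T| > \frac{n}{1+s}$ the contrapositive forces some voter in $T$ to approve $a$.

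For the lower bound, I construct, for each $r < \frac{1}{1+s}$, an instance on which a group $T$ of $s$-flexible voters with $|T| > rn$ unanimously disapproves the winner. Taking a large $m$, I let $T$ consist of $n_2$ voters each approving $\lceil sm \rceil$ candidates in $M \setminus \{a\}$, distributed as evenly as possible across the $m-1$ non-$a$ candidates (using the iterative scheme of \Cref{thm:singlewinner-lowerbound}), together with $n_1$ ``dummy'' voters whose sole approval is $a$. Each $b \neq a$ then collects roughly $\frac{sm \cdot n_2}{m-1}$ approvals; setting $n_1$ to be one greater than this quantity makes $a$ the unique approval winner. The ratio $n_2/(n_1 + n_2)$ tends to $\frac{m-1}{m(1+s)-1}$ as $n_2 \to \infty$, which in turn tends to $\frac{1}{1+s}$ as $m \to \infty$, so for sufficiently large $m$ and $n_2$ the instance witnesses $|T|/n > r$.

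The only real obstacle is bookkeeping: the rounding in $\lceil sm \rceil$, specifying the even-distribution scheme for approvals in $T$, and the single extra dummy voter that makes $a$ strictly the approval winner (removing any tie-breaking ambiguity). None of these is conceptually subtle, and each is handled exactly as in \Cref{thm:singlewinner-lowerbound}; the asymptotic slack between $\frac{m-1}{m(1+s)-1}$ and $\frac{1}{1+s}$ absorbs any $o(1)$ errors introduced by the rounding.
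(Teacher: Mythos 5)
Your proposal is correct and follows essentially the same approach as the paper's proof: the upper bound is the same averaging/pigeonhole comparison between the winner's score (bounded by $n-|T|$) and the best candidate among the group's approvals, just run in contrapositive form with the slightly sharper $m$-dependent bound $\frac{n(m-1)}{m(1+s)-1}$, and the lower-bound construction (one block approving only $a$, one $s$-flexible block spreading $\lceil sm\rceil$ approvals evenly over $M\setminus\{a\}$) is the paper's construction with a different but equivalent parametrization of the block sizes. All the claimed inequalities check out, so only the routine bookkeeping you already flag remains.
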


\begin{proof}
Fix $s\in (0,1)$.
We first show the upper bound $\FVRMath(R_{\text{approval}},s) \le \frac{1}{1+s}$.
Take an arbitrary instance, and consider a group~$B$ of more than $\frac{1}{1+s}\cdot n$ voters, all of whom are $s$-flexible.
The total number of approvals made by the voters in $B$ is greater than $\frac{1}{1+s}\cdot n\cdot sm$.
Since there are $m$ candidates, there exists a candidate $a$ approved by more than $\frac{1}{1+s}\cdot n\cdot s = \frac{s}{1+s}\cdot n$ voters in $B$.
On the other hand, a candidate that is not approved by any voter in $B$ has an approval score of less than $(1 - \frac{1}{1+s})n = \frac{s}{1+s}\cdot n$.
In particular, such a candidate is approved by fewer voters than $a$.
Hence, the candidate chosen by the approval voting rule must be approved by some voter in $B$.

Next, we show that $\FVRMath(R_{\text{approval}},s) \ge \frac{1}{1+s}$.
Take any $r < \frac{1}{1+s}$.
It suffices to show that there exists an instance where a group of at least $rn$ voters are all $s$-flexible but do not approve the candidate chosen by the approval voting rule.
Consider an instance with sufficiently large $n$ and $m$ (to be made more precise later).
Let $C$ be a group of $\ceiling{rn}$ voters, and assume that all voters in $N\setminus C$ approve a common candidate $a$ (and no other candidate).
Moreover, assume that each voter in $C$ approves exactly $\ceiling{sm}$ candidates in $M\setminus\{a\}$, and these voters distribute their approvals among these candidates as equally as possible.
Note that $a$ receives $n - \ceiling{rn} = \floor{(1-r)n}$ approvals, while every other candidate receives at most $\left\lceil{\frac{\ceiling{rn}\cdot \ceiling{sm}}{m-1}}\right\rceil$ approvals.
Since $r < \frac{1}{1+s}$, we have $1-r > \frac{s}{1+s}$, and so $\floor{(1-r)n} > \frac{s}{1+s}\cdot n$ when $n$ is sufficiently large.
Hence, $a$ receives more than $\frac{s}{1+s}\cdot n$ approvals for sufficiently large $n$.
On the other hand, note that
\begin{align*}
\left\lceil{\frac{\ceiling{rn}\cdot \ceiling{sm}}{m-1}}\right\rceil
&< \frac{(rn+1)(sm+1)}{m-1} + 1 \\
&= (rn+1)\cdot \frac{sm+1}{m-1} + 1.
\end{align*}
As $m$ grows, $\frac{sm+1}{m-1}$ converges to $s$, so $(rn+1)\cdot \frac{sm+1}{m-1} + 1$ converges to $(rn+1)s + 1 = rns+s+1$.
Since $r < \frac{1}{1+s}$, we have $rns+s+1 < \frac{1}{1+s}\cdot ns = \frac{s}{1+s}\cdot n$ for sufficiently large $n$.
This means that for sufficiently large $n$ and $m$,  each candidate in $M\setminus\{a\}$ receives at most $\frac{s}{1+s}\cdot n$ approvals.
Consequently, the approval voting rule chooses the candidate $a$, which is approved by none of the voters in $C$.
\end{proof}

Comparing Theorems~\ref{thm:singlewinner-lowerbound} and \ref{thm:approval}, we find that the \FVR guarantee of the approval voting rule is essentially optimal for $s$ close to $0$, but gets further from optimal as $s$ increases.

\subsection{Power Scoring Rules}

To achieve better guarantees for larger $s$, an enticing idea is to give more weight to voters with higher flexibility.
However, unlike the rule $R_{s\text{-threshold}}$ in the proof of \Cref{thm:singlewinner-lowerbound} where this weight is binary (according to whether a voter is $s$-flexible), we shall use a smoother weighting scheme.

To this end, for each real number $p\ge 0$, we define the \emph{$p$-power approval score} of candidate $a$ as $\sc_p(a) := \sum_{i \in N_a} f_i^p$ (recall that $N_a$ denotes the set of voters who approve $a$).
We consider the \emph{$p$-power scoring rule}, which chooses a candidate that maximizes the $p$-power approval score, breaking ties arbitrarily.
Note that if $p = 0$, then $\sum_{i \in N_a} f_i^p = |N_a|$, so the rule reduces to the approval voting rule which we already analyzed in \Cref{thm:approval}.
As $p$ increases, the rule places more importance on voters with high flexibility.
For each $p$, we derive tight \FVR guarantees for the corresponding rule.

\begin{theorem}
\label{thm:power}
For each $p > 0$, let $R_{p\emph{-power}}$ be the $p$-power scoring rule.
For each $s\in (0,1)$, we have $\FVRMath(R_{p\emph{-power}},s) = \frac{1}{1 + \frac{(s(1+p))^{1+p}}{p^p}}$.
\end{theorem}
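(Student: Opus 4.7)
The plan is to establish matching upper and lower bounds on $\FVRMath(R_{p\text{-power}}, s)$; write $\phi := (s(1+p))^{1+p}/p^p$ so the target is $1/(1+\phi)$. For the upper bound, I would argue by contradiction: suppose a group $B$ with $|B| > n/(1+\phi)$ of $s$-flexible voters all disapprove the winner $w$. Since $w$ maximizes the $p$-power score, $\sc_p(c) \le \sc_p(w)$ for every $c \ne w$. Summing this over the $m-1$ candidates $c \ne w$ and using the identity $\sum_c \sc_p(c) = \sum_i f_i^p |A_i| = m \sum_i f_i^{p+1}$ gives $\sum_{i \in N} f_i^{p+1} \le \sc_p(w)$. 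Because no one in $B$ approves $w$, we also have $\sc_p(w) \le \sum_{i \in N \setminus B} f_i^p$, so rearranging yields $\sum_{i \in B} f_i^{p+1} \le \sum_{i \in N \setminus B} f_i^p(1-f_i)$. The left side is at least $|B| s^{p+1}$, while the right side is at most $(n - |B|) \cdot p^p/(p+1)^{p+1}$ because $f^p(1-f)$ on $[0,1]$ is maximized at $f = p/(p+1)$ (a one-line calculus check). Solving the resulting inequality for $|B|/n$ gives $|B|/n \le 1/(1+\phi)$, a contradiction.

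For the matching lower bound, I would construct a family of bad instances, one for each $r < 1/(1+\phi)$ and all sufficiently large $m, n$. Set $t := p/(p+1)$ (the maximizer of $f^p(1-f)$), designate a candidate $w$, and take a group $B$ of $\lceil rn \rceil$ voters. Voters in $B$ each approve $\lceil sm \rceil$ candidates from $M \setminus \{w\}$, while voters in $N \setminus B$ each approve $w$ together with $\lceil tm \rceil - 1$ other candidates; both sides use a balanced (e.g.\ round-robin) allocation so that each $c \in M \setminus \{w\}$ is approved by about $|B|s$ voters in $B$ and $(n-|B|)t$ voters in $N \setminus B$. A direct calculation as $m \to \infty$ gives
\[
\sc_p(w) - \sc_p(c) \;\longrightarrow\; (n - |B|)\, t^p (1-t) - |B|\, s^{p+1}
\]
for every $c \ne w$. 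This is strictly positive precisely when $|B|/n < t^p(1-t)/(t^p(1-t) + s^{p+1})$, which simplifies to $1/(1+\phi)$ at $t = p/(p+1)$. Hence for $|B|/n$ slightly below $1/(1+\phi)$, the $p$-power rule picks $w$ and no voter in $B$ approves the winner.

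The main obstacle I anticipate lies in getting the lower-bound construction to work for every $s \in (0,1)$, not just small $s$. A naive variant would have voters in $N \setminus B$ all approve a single common set $T \ni w$ of size $tm$; but then voters in $B$ must approve only candidates in $M \setminus T$ (otherwise some candidate in $T$ approved by $B$ would outscore $w$), introducing the constraint $t \le 1 - s$, which binds whenever $s > 1/(p+1)$ and yields a strictly weaker lower bound in that regime. The fix is to have voters in $N \setminus B$ share only $w$ and spread their remaining approvals evenly across $M \setminus \{w\}$: then each $c \ne w$ picks up only a $t$-fraction of the $N \setminus B$ score rather than all of it, removing the fitting constraint and letting $t = p/(p+1)$ be used uniformly over the full range of $s$.
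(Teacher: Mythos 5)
Your proposal is correct and follows essentially the same route as the paper's proof: the upper bound via comparing the winner's score to the average score $\sum_i f_i^{1+p}$ and bounding $f^p(1-f)$ by its maximum $p^p/(1+p)^{1+p}$, and the lower bound via the identical construction in which the blocking group spreads $\lceil sm\rceil$ approvals over $M\setminus\{w\}$ while the remaining voters approve $w$ plus roughly a $\frac{p}{1+p}$-fraction of the other candidates, spread evenly. The closing discussion of why the non-blocking voters must share only $w$ (rather than a common set) is a sensible observation but does not change the argument.
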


The proof of \Cref{thm:power}, along with all other missing proofs, can be found in \Cref{app:omitted}.
Comparing \Cref{thm:power} with \Cref{thm:singlewinner-lowerbound}, we find that the \FVR guarantee of the $p$-power scoring rule is optimal only at $s = \frac{p}{1+p}$.
Indeed, one can verify by algebraic manipulation that
\begin{align*}
\frac{1}{1 + \frac{(s(1+p))^{1+p}}{p^p}} = 1-s
\, \Longleftrightarrow \,
s^p(1-s) = \frac{p^p}{(1+p)^{1+p}}.
\end{align*}
Within the range $x\in [0,1]$, basic calculus shows that the function $x^p(1-x)$ is maximized at $x = \frac{p}{1+p}$, and the resulting maximum is $\frac{p^p}{(1+p)^{1+p}}$.
Hence, the only $s$ for which the \FVR guarantee of the $p$-power scoring rule is optimal is $s = \frac{p}{1+p}$, and the corresponding guarantee is $\frac{1}{1+p}$.

\begin{figure}
\begin{tikzpicture}
\begin{axis}[
    axis lines = left,
    xlabel = \(s\),
    ylabel = {\(\FVRMath(R_p,s)\)},
    xmin=0, xmax=1,
    ymin=0, ymax=1,
    xtick= {0,0.2,0.4,0.6,0.8,1},
    ytick= {0,0.2,0.4,0.6,0.8,1},
    legend pos=south west,
]
\addplot [
    domain=0:1, 
    samples=100, 
    color=black,
    style=very thick,
]
{1 - x};
\addlegendentry{Optimal}
\addplot [
    domain=0:1, 
    samples=100, 
    color=red,
    style=thick,
    ]
    {1/(1+x)};
\addlegendentry{\(p = 0\)}
\addplot [
    domain=0:1, 
    samples=100, 
    color=blue,
    style=dashed,
    style=thick,
    ]
    {1/(1+4*x^2)};
\addlegendentry{\(p = 1\)}
\addplot [
    domain=0:1, 
    samples=100, 
    color=brown,
    style=dotted,
    style=thick,
    ]
    {1/(1+27*x^3/4)};
\addlegendentry{\(p = 2\)}
\end{axis}
\end{tikzpicture}
\centering
\caption{\FVR guarantees of $p$-power scoring rules for flexibility thresholds $s\in (0,1)$, compared to the optimal guarantees indicated by the thick line.
For $p=1, 2$, the guarantees match the optimal ones at $s = 1/2, 2/3$, respectively.}
\label{fig:power}
\end{figure}
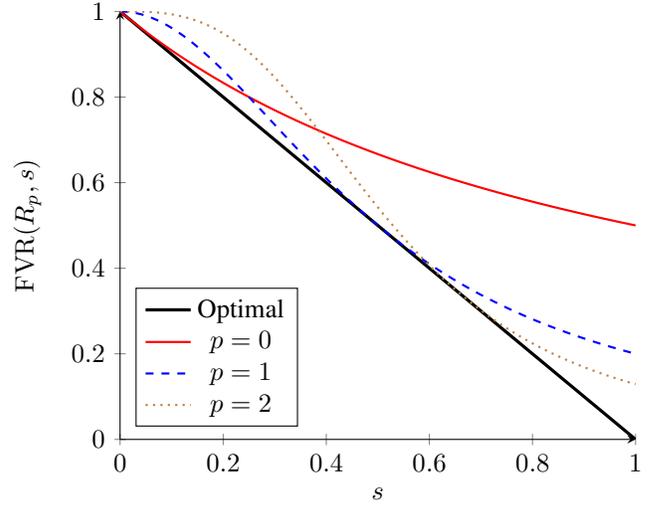

The \FVR guarantees of $p$-power scoring rules for $p = 0$ (i.e., approval voting rule), $1$, and $2$, as well as the optimal guarantees, are illustrated in \Cref{fig:power}.

\subsection{Optimal Scoring Rule}

Our discussion thus far raises an obvious question: is there an inherent trade-off between optimizing for different values of~$s$, or does there exist a rule that achieves the optimal \FVR guarantee for all $s$ simultaneously (that is, a rule whose guarantees exactly match the thick line in \Cref{fig:power})?
Perhaps surprisingly, we show that such a rule in fact exists, and moreover, comes from a simple class called \emph{scoring rules}~\citep{fishburn1979symmetric}.\footnote{In the original definition, these are called \emph{simple scoring rules}, but for conciseness, we drop the word `simple'.}

To define scoring rules, let a \emph{weight function} be a function $w: \mathbb{Q} \cap (0, 1) \to \mathbb{R}_{\ge 0}$ mapping each flexibility to a weight.
From this, we define the \emph{$w$-score} of a candidate $a$ to be $\sc^w(a) = \sum_{i \in N_a} w(f_i)$.
We assume without loss of generality that $w(1) = 0$; if a voter~$i$ has $f_i = 1$, then she approves all candidates, so modifying $w(1)$ changes the scores of all candidates by a constant. 
We call a weight function $w$  \emph{nontrivial} if $w(f) > 0$ for some $f \in (0, 1)$.\footnote{If $w(f) = 0$ for all $f$, then the $w$-scores of all candidates will be equal.}
We say that a rule is \emph{parameterized by $w$} if it always picks a candidate maximizing $\sc^w$, and we call it a \emph{scoring rule} if it is parameterized by some nontrivial $w$.   
Note that all of the rules we have considered so far are scoring rules: $R_{s\text{-threshold}}$ is parameterized by $w(f) = \mathbb{I}[f \ge s]$, $R_{\text{approval}}$ is parameterized by $w(f) = 1$, and $R_{p\text{-power}}$ is parameterized by $w(f) = f^p$.
 
For our optimal rule, unlike the power rules we have seen, we will choose the weight function $w(f) = \frac{1}{1-f}$.
Specifically, let $\Ropt$ be the rule that assigns a score of $\sum_{i\in N_a}\frac{1}{1-f_i}$ to each candidate~$a$ and selects a candidate with the highest score, breaking ties arbitrarily.\footnote{If a voter approves all candidates, we simply ignore that voter.}
We say that a rule~$R$ is \emph{\FVR-optimal} if $\FVRMath(R,s) = 1-s$ for all $s\in (0,1)$.

\begin{theorem}
\label{thm:singlewinner-optimal}
The rule $R_\emph{OPT}$ is \FVR-optimal.
\end{theorem}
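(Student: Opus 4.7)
The plan is to prove only the upper bound $\FVRMath(R_\text{OPT}, s) \le 1 - s$, since the matching lower bound follows from \Cref{thm:singlewinner-lowerbound}. So I fix an instance and a group $B$ of more than $(1-s)n$ voters, each $s$-flexible, and assume for contradiction that the candidate $a^*$ chosen by $\Ropt$ is disapproved by every voter in $B$. Let $N' := \{i \in N : f_i < 1\}$ denote the voters actually counted by the rule. Note that any voter with $f_i = 1$ automatically approves $a^*$, so under our contradiction assumption $B \subseteq N'$.

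The key is an averaging argument exploiting the exact form $w(f) = 1/(1-f)$. I would first compute
\[
\frac{1}{m}\sum_{a \in M} \sc^w(a) \;=\; \frac{1}{m}\sum_{i \in N'} |A_i| \cdot \frac{1}{1-f_i} \;=\; \sum_{i \in N'} \frac{f_i}{1-f_i},
\]
so that the maximum-score candidate satisfies $\sc^w(a^*) \ge \sum_{i \in N'} \frac{f_i}{1-f_i}$. On the other hand, since $N_{a^*} \cap B = \emptyset$ and $N_{a^*} \subseteq N'$, we have $\sc^w(a^*) \le \sum_{i \in (N\setminus B) \cap N'} \frac{1}{1-f_i}$. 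Chaining these two inequalities and cancelling the common $(N\setminus B)\cap N'$ terms yields
\[
\sum_{i \in B} \frac{f_i}{1-f_i} \;\le\; \sum_{i \in (N\setminus B)\cap N'} \frac{1-f_i}{1-f_i} \;=\; |(N\setminus B)\cap N'| \;\le\; n - |B|.
\]

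To finish, I use that $x \mapsto x/(1-x)$ is increasing on $(0,1)$, so $f_i \ge s$ gives $f_i/(1-f_i) \ge s/(1-s)$ for each $i \in B$. Therefore $|B| \cdot \frac{s}{1-s} \le n - |B|$, which rearranges to $|B| \le (1-s)n$, contradicting $|B| > (1-s)n$. The main obstacle, if one can call it that, is really just recognizing why $w(f) = 1/(1-f)$ is the right weight: the identity $\frac{1}{1-f} - \frac{f}{1-f} = 1$ is precisely what makes the telescoping above collapse to a clean voter count. Everything else is a standard averaging/pigeonhole argument and a monotonicity observation.
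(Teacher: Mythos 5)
Your proof is correct and is essentially the paper's argument: both rest on the same one-line averaging bound, which the paper phrases dually by having each voter hand $\frac{1}{1-f_i}$ points to each \emph{disapproved} candidate (so every voter distributes exactly $m$ points in total and the chosen candidate receives at most $n$ of them), while you work with the primal approval scores and let the identity $\frac{1}{1-f}-\frac{f}{1-f}=1$ perform the same cancellation. Your version is in effect the specialization of the general weight-function analysis (\Cref{thm:general-weight}) to $w(f)=\frac{1}{1-f}$, which the paper explicitly notes yields \Cref{thm:singlewinner-optimal} as a consequence.
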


\begin{proof}
Observe that an equivalent formulation of $\Ropt$ is that each voter~$i$ distributes $\frac{1}{1-f_i}$ points to each candidate that the voter \emph{disapproves}, and $\Ropt$ chooses a candidate with the \emph{lowest} total points.
In this formulation, each voter~$i$ distributes a total of $\frac{1}{1-f_i}\cdot (1-f_i)m = m$ points,\footnote{Again, we ignore voters who approve all candidates. Note that voters who approve none of the candidates also distribute points.} so at most $nm$ points are distributed overall. 
This means that the candidate~$a$ chosen by $\Ropt$ receives at most $n$ points.

Next, fix any $s\in(0,1)$, and consider a group of $s$-flexible voters who all disapprove $a$.
Since each voter in the group gives at least $\frac{1}{1-s}$ points to $a$, the size of the group is at most $(1-s)n$.
It follows that whenever a group of more than $(1-s)n$ voters are all $s$-flexible, at least one voter in the group approves $a$.
\end{proof}

\subsection{General Weight Function Analysis}
\label{subsec:general}
\Cref{thm:approval,thm:power,thm:singlewinner-optimal} provide analyses for specific scoring rules. 
What can we say about scoring rules more broadly?
In the following theorem, we analyze the \FVR guarantees of arbitrary scoring rules. 
Given a weight function $w$, let $R^w$ denote a scoring rule parameterized by $w$.

\begin{theorem}\label{thm:general-weight}
Let $w$ be a nontrivial weight function.
Then,
    \[
        \FVRMath(R^w, s) = \frac{\rho}{\rho + \varphi_s},
    \]
    where $\rho := \sup_{f} (1 - f)\cdot w(f)$ and $\varphi_s := \inf_{f \ge s} f \cdot w(f)$; if $\rho = \infty$, the ratio is defined to be $1$.
    Further, if $w$ is nondecreasing, then $\varphi_s = s\cdot w(s)$ and the bound simplifies to
    \[
        \FVRMath(R^w, s) = \frac{\rho}{\rho + s \cdot w(s)}.
    \]
\end{theorem}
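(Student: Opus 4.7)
The plan is to prove the tight characterization via matching upper and lower bounds, paralleling the structure of Theorems~\ref{thm:singlewinner-lowerbound} and \ref{thm:approval}. For the upper bound, the starting point is the double-counting identity
\[
\sum_{a\in M}\sc^w(a) \;=\; \sum_{i\in N}|A_i|\,w(f_i) \;=\; m\sum_{i\in N} f_i\, w(f_i),
\]
so the winning candidate $a$ achieves at least the average, $\sc^w(a)\ge \sum_i f_i w(f_i)$. Splitting the right-hand side over $N_a$ and $N\setminus N_a$ and rearranging against $\sc^w(a)=\sum_{i\in N_a}w(f_i)$ yields
\[
\sum_{i\in N_a}(1-f_i)\,w(f_i) \;\ge\; \sum_{i\in N\setminus N_a} f_i\, w(f_i).
\]
Let $B$ be any group of $s$-flexible voters disapproving $a$. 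Bounding each term on the left by $\rho$ and each of the $|B|$ relevant terms on the right (noting $B\subseteq N\setminus N_a$) by $\varphi_s$ gives $|N_a|\,\rho \ge |B|\,\varphi_s$. Combining this with $|N_a|+|B|\le n$, a short manipulation yields $|B|\le \rho n/(\rho+\varphi_s)$. If $\rho=\infty$, the bound degenerates to $n$, matching the stated convention.

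For the lower bound, I would fix any $r<\rho/(\rho+\varphi_s)$ and construct a violating instance using two voter types. Pick a rational $f^{*}\in(0,1)$ with $(1-f^{*})\,w(f^{*})$ arbitrarily close to $\rho$ (or arbitrarily large when $\rho=\infty$), and a rational $f^{**}\ge s$ with $f^{**}\,w(f^{**})$ arbitrarily close to $\varphi_s$. Designate a candidate $a$. Type-1 voters (of which there are $n_1$) approve $a$ together with enough other candidates so their flexibility is exactly $f^{*}$, while type-2 voters (of which there are $n_2$) approve only candidates in $M\setminus\{a\}$ with flexibility $f^{**}$. Distribute the non-$a$ approvals as evenly as possible across the $m-1$ non-$a$ candidates, as in the proof of Theorem~\ref{thm:approval}. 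For large $m$, every non-$a$ candidate has score close to $n_1 f^{*} w(f^{*}) + n_2 f^{**} w(f^{**})$, while $a$'s score is $n_1\,w(f^{*})$. Setting the ratio $n_2/n_1$ just below $(1-f^{*})w(f^{*})/(f^{**}w(f^{**}))$ makes $a$ the unique winner, and as the approximations tighten, the fraction $n_2/n$ can be pushed above any fixed $r<\rho/(\rho+\varphi_s)$; the entire type-2 group then witnesses the desired lower bound.

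For the simplification when $w$ is nondecreasing, both $f$ and $w(f)$ are nondecreasing on $[s,1)$, so $fw(f)$ is nondecreasing there and $\varphi_s = s\,w(s)$ is immediate. The main obstacle is the usual discretization: the chosen flexibilities $f^{*},f^{**}$ must be realizable as $|A_i|/m$, and the extrema defining $\rho$ and $\varphi_s$ need not be attained. Both are routinely handled by taking $m$ along a subsequence divisible by the denominators of $f^{*}$ and $f^{**}$ and sending the approximation errors to zero, exactly as in the lower-bound arguments of Theorems~\ref{thm:singlewinner-lowerbound} and \ref{thm:approval}.
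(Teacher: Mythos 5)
Your proposal is correct and follows essentially the same route as the paper: the upper bound comes from comparing the winner's score to the average score $\sum_i f_i w(f_i)$ and splitting voters into approvers and disapprovers of the winner (your use of $N_a$ plus $|N_a|+|B|\le n$ is a trivial rearrangement of the paper's use of $\bar B=N\setminus B$), and the lower bound uses the same two-type construction with a special candidate, flexibilities chosen to approach the sup defining $\rho$ and the inf defining $\varphi_s$, and evenly spread non-special approvals. The discretization and approximation details you defer are handled in the paper exactly as you indicate.
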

\begin{proof}
    Fix $w$, $R^w$, and $s \in (0, 1)$. 
    We first bound $\FVR(R^w, s)$ from above. 
    If $\rho = \infty$ then the bound holds trivially, so assume that $\rho < \infty$.
    Fix an instance and let $a^*$ be the candidate chosen by $R^w$. 
Let $B$ be a set of $s$-flexible voters who disapprove $a^*$, and let $\bar{B} = N\setminus B$ be the set of remaining voters.
Note that  $\sc^{w}(a^*) \le \sum_{i \in \bar{B}} w(f_i)$, because no voter in $B$ can contribute to the score of $a^*$. 
Consider now the average score of all candidates, which is equal to
\begin{align*}
\frac{1}{m} \sum_{a \in M} \sc^{w}(a) &= \frac{1}{m} \sum_{a \in M} \sum_{i \in N_a} w(f_i) \\
&= \frac{1}{m} \sum_{i\in N} |A_i| \cdot w(f_i) = \sum_{i\in N} f_i\cdot w(f_i),
\end{align*}
where the second equality holds because each voter $i$ contributes $w(f_i)$ points to each of exactly $|A_i|$ candidates. 
Since $a^*$ has the largest $w$-score, it must have $w$-score at least the average.
This implies that
\[\sum_{i \in \bar{B}} w(f_i) - \sum_{i \in N} f_i \cdot w(f_i) \ge \sc^{w}(a^*) - \frac{1}{m} \sum_{a \in M} \sc^{w}(a) \ge 0.\]
On the other hand,
\begin{align*}
   &\sum_{i \in \bar{B}} w(f_i) - \sum_{i \in N} f_i \cdot w(f_i) \\
    &= \sum_{i \in \bar{B}} w(f_i) - \left(\sum_{i \in B}f_i \cdot w(f_i) + \sum_{i \in \bar{B}}f_i \cdot w(f_i)\right)\\
    &= \sum_{i \in \bar{B}} \left(1 - f_i\right)\cdot w(f_i) - \sum_{i \in B} f_i \cdot w(f_i)\\
    &\le (n - |B|) \rho - |B| \varphi_s,
\end{align*}
where the inequality holds by the definitions of $\rho$ and $\varphi_s$, as $f_i \ge s$ for $i \in B$. Finally, observe that
\begin{align*}
    (n - |B|) \rho - |B| \varphi_s \ge 0
    &\iff n \rho \ge |B|(\rho + \varphi_s)\\
    &\iff \frac{|B|}{n} \le \frac{\rho}{\rho + \varphi_s}.
\end{align*}
Note that since $w$ is nontrivial, we have $\rho > 0$, and so $\rho + \varphi_s > 0$ and the last transition is valid.

The remainder of the proof, which bounds $\FVR(R^w, s)$ from below, can be found in \Cref{app:general-weight}.
\qedhere
\end{proof}

\Cref{thm:general-weight} is a powerful tool that allows us to analyze arbitrary scoring rules. Indeed, \Cref{thm:approval,thm:power,thm:singlewinner-optimal} could be proven as consequences (although we keep them separate for exposition). 
In addition, we can derive other nontrivial consequences, including the following fact that $\Ropt$ is essentially the unique scoring rule that ensures \FVR-optimality. 

\begin{theorem}
\label{thm:characterization}
    Suppose that $R^w$ is \FVR-optimal for some nontrivial weight function $w$. 
    Then, there exists a positive real number $c$ such that $w(f) = \frac{c}{1 - f}$ for all $f$.
\end{theorem}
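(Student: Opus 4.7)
The plan is to invoke Theorem \ref{thm:general-weight} and turn the FVR-optimality condition $\FVRMath(R^w,s) = 1-s$ into a functional equation for $w$. Writing $\rho := \sup_f (1-f)w(f)$ and $\varphi_s := \inf_{f\ge s} f\cdot w(f)$ as in that theorem, the first task is to rule out the degenerate values $\rho \in \{0,\infty\}$. If $\rho = \infty$, then Theorem \ref{thm:general-weight} gives $\FVRMath(R^w,s) = 1$, which contradicts FVR-optimality at any $s > 0$; if $\rho = 0$, then $(1-f)w(f) = 0$ forces $w \equiv 0$ on $\mathbb{Q}\cap(0,1)$, contradicting nontriviality. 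So $\rho \in (0,\infty)$.

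Next, since $\rho + \varphi_s > 0$, the equation $\rho/(\rho+\varphi_s) = 1-s$ rearranges to $\varphi_s = \rho s/(1-s)$ for every $s \in \mathbb{Q}\cap(0,1)$. Now I would exploit the fact that $\varphi_s$ is an infimum over $\{f \ge s\}$ to squeeze $w(s)$ between two tight bounds. On one side, the infimum definition applied at $f = s$ yields $s\cdot w(s) \ge \varphi_s = \rho s/(1-s)$, which upon dividing by $s$ gives $w(s) \ge \rho/(1-s)$. On the other side, $s$ itself is an admissible point for the supremum defining $\rho$, so $(1-s)w(s) \le \rho$, i.e.\ $w(s) \le \rho/(1-s)$.

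Combining the two inequalities yields $w(s) = \rho/(1-s)$ for every $s \in \mathbb{Q}\cap(0,1)$, which is exactly the desired form with $c = \rho > 0$. I do not anticipate a serious obstacle: the argument is essentially an algebraic inversion of Theorem \ref{thm:general-weight} coupled with the trivial observation that the supremum and infimum in its statement can each be evaluated at the single point $f = s$. The one spot that merits a careful sentence is ruling out $\rho = \infty$ and $\rho = 0$ at the outset, since otherwise the functional equation is meaningless or vacuous.
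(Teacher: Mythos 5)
Your proposal is correct and follows essentially the same route as the paper: both invoke \Cref{thm:general-weight} and exploit that $f=s$ is an admissible point in both the supremum defining $\rho$ and the infimum defining $\varphi_s$, yielding the two-sided bound $w(s)=\rho/(1-s)$. The paper phrases this as showing $(1-f)w(f)$ is constant across pairs $f_1,f_2$ (after first verifying $w>0$ everywhere), whereas you solve for $\varphi_s$ explicitly and identify $c=\rho$ directly---a minor streamlining, not a different argument.
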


\section{Multi-Winner Voting}
\label{sec:multi-winner}

\subsection{Preliminaries}
In this section, we turn to the more general setting of multi-winner voting. 
Given an instance, our rules $R$ now choose a \emph{subset} of winning candidates of a given size $k$.
We refer to a subset of $k$ candidates as a \emph{$k$-committee} (or just \emph{committee} if $k$ is clear from the context) and refer to a rule that selects a $k$-committee as a \emph{$k$-committee rule}.
Assume without loss of generality that $k < m$, since if $k = m$, there is a single $k$-committee which also makes all voters maximally happy.

To generalize the \FVR condition to the multi-winner setting, we need to generalize the notion of a voter ``approving the chosen candidate'' to ``approving the chosen committee.'' 
Two natural generalizations immediately come to mind. 
Fix a committee $W$. 
A quite lenient notion we could consider is to say that a voter $i$ approves $W$ if she approves \emph{at least one} of the committee members (i.e., $W \cap A_i \ne \emptyset$), while a much more stringent requirement is that $i$ approves $W$ only if she approves \emph{all} of the committee members (i.e., $W \subseteq A_i$). 
We will generalize both of these notions at once by introducing an additional (positive integer) parameter $t \le k$, and saying that a voter \emph{$t$-approves} a committee $W$ if she approves at least $t$ of the committee members (i.e., $|W \cap A_i| \ge t$). 
If a voter does not $t$-approve a committee (i.e., $|W \cap A_i| < t$), we will say that she \emph{$t$-disapproves} it. The earlier notions correspond to the special cases of $t=1$ and $t=k$, respectively.

With this generalization in hand, we can now extend the notion of \FVR to the multi-winner setting. 
Unlike in the single-winner setting, it will be more convenient to state our guarantees including their dependency on $m$ (rather than in the worst case over all $m$), although we will primarily be interested in the behavior of these bounds as $m$ grows large. 
This is because in some cases, unlike in the single-winner setting, as $m$ grows, the worst case may occur for a fixed finite $m$ rather than in the limit. 
With this in mind, for a $k$-committee rule $R$, we define $\FVRMath(R, s, t, m)$ as the largest possible fraction of voters that are all $s$-flexible, yet $t$-disapprove the chosen committee when choosing from $m$ candidates using~$R$.

\subsection{Lower Bound}

We begin by deriving a lower bound analogous to \Cref{thm:singlewinner-lowerbound} from the single-winner setting. Let $h(P, K, \ell; \cdot)$ and $H(P, K, \ell; \cdot)$ be the probability mass function and cumulative distribution function of a hypergeometric random variable with population size $P$, population success size $K$, and draw size $\ell$, respectively.\footnote{Recall that $h(P, K, \ell; t) = \frac{\binom{K}{t} \binom{P - K}{\ell - t}}{\binom{P}{\ell}}$ for $t\in\{0,1,\dots,\ell\}$, and $H(P, K, \ell; t) = \sum_{t' = 0}^t h(P, K, \ell; t')$ for the same range of $t$.} 
In other words, this is the distribution over the number of ``good'' elements sampled when taking a random subset of size $\ell$ out of $P$ elements where $K$ of the elements are good.

\begin{theorem}\label{thm:multiwinner-lowerbound}
    For any $k$-committee rule $R$, any $s \in (0, 1)$, and any $t \le k$, we have \[\FVRMath(R, s, t, m) \ge H(m, \ceiling{sm}, k; t - 1).\]
    Additionally, there exists a $k$-committee rule $R$ (tailored to this $(s, t)$ combination) for which this is tight.
\end{theorem}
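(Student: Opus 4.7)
The plan is to prove the two directions separately, with both arguments ultimately reducing to a single hypergeometric calculation. The core observation is: if a voter's approval set of size exactly $\lceil sm \rceil$ is a uniform random subset of $M$, then for any fixed $k$-committee $W$, $|A_i \cap W|$ is hypergeometric, and by the standard symmetry $h(P, K, \ell; t) = h(P, \ell, K; t)$ (which follows from $\binom{K}{t}\binom{P-K}{\ell-t}/\binom{P}{\ell} = \binom{\ell}{t}\binom{P-\ell}{K-t}/\binom{P}{K}$), the probability that the voter $t$-disapproves $W$ equals $H(m, \lceil sm \rceil, k; t-1)$.

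For the lower bound, I would fix any rule $R$ and any $r < H(m, \lceil sm \rceil, k; t-1)$ and construct an adversarial instance. Take $n$ to be a positive multiple of $\binom{m}{\lceil sm \rceil}$ and partition the voters into equal-sized blocks indexed by the subsets $S \subseteq M$ of size $\lceil sm \rceil$, assigning every voter in block $S$ the approval set $S$. Every voter is $s$-flexible. For whichever committee $W$ the rule selects, the symmetric construction means the fraction of voters whose approval set has fewer than $t$ elements in common with $W$ is exactly $\sum_{j=0}^{t-1} \binom{k}{j} \binom{m-k}{\lceil sm \rceil - j}/\binom{m}{\lceil sm \rceil} = H(m, k, \lceil sm \rceil; t-1) = H(m, \lceil sm \rceil, k; t-1) > r$. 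Hence $\FVRMath(R, s, t, m) > r$, and since $r$ was arbitrary this yields the claimed bound.

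For the matching rule, I would define $R^*$ to output, on each instance, a $k$-committee that minimizes the number of $s$-flexible voters who $t$-disapprove it, and bound this minimum via the probabilistic method. Draw $W$ uniformly at random among the $\binom{m}{k}$ committees: for each $s$-flexible voter $i$, the probability that $|A_i \cap W| < t$ equals $H(m, |A_i|, k; t-1)$. Since $\mathrm{Hyp}(m, K, k)$ is stochastically nondecreasing in $K$ (seen via a simple coupling that converts one ``failure'' into a ``success'' at a time), its CDF at $t-1$ is nonincreasing in $K$, so $|A_i| \ge \lceil sm \rceil$ gives an upper bound of $H(m, \lceil sm \rceil, k; t-1)$. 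Summing over $s$-flexible voters and applying linearity of expectation produces a committee achieving at most $n \cdot H(m, \lceil sm \rceil, k; t-1)$ $s$-flexible $t$-disapproving voters, so $R^*$ matches the lower bound.

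I do not anticipate a serious obstacle; both directions follow once the right symmetry and monotonicity properties of the hypergeometric distribution are in hand. The subtlest points are (i)~invoking the hypergeometric symmetry identity in the lower bound to convert the natural parameterization $H(m, k, \lceil sm \rceil; t-1)$ into the statement's form $H(m, \lceil sm \rceil, k; t-1)$, and (ii)~justifying stochastic monotonicity in the second hypergeometric parameter in the upper bound so that the per-voter bound transfers to voters with strictly more than $\lceil sm \rceil$ approvals.
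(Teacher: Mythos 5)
Your proposal is correct and follows essentially the same route as the paper: the same symmetric ``all $\binom{m}{\ceil{sm}}$ approval sets'' construction for the lower bound, and the same rule (maximize $t$-approvals among $s$-flexible voters, equivalently minimize $t$-disapprovals) with an averaging/first-moment argument plus monotonicity of the hypergeometric CDF in the success-count parameter for tightness. The only cosmetic difference is that you compute the per-committee disapproval fraction directly via the hypergeometric symmetry identity, whereas the paper computes the per-voter fraction of $t$-approved committees (its Lemma~3.2) and double-counts; these are the same calculation.
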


The proof of this theorem (along with other results about multi-winner voting) relies on the following correspondence between the number of candidates approved and the number of $k$-committees which are $t$-approved. 
\begin{lemma}\label{lem:t-approve}
    If a voter approves exactly $\ell$ candidates, then she $t$-approves a $(1 - H(m, \ell, k; t - 1))$-fraction of all $k$-committees.
\end{lemma}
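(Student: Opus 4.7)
The plan is to prove this by direct counting, identifying a uniformly random $k$-committee with a hypergeometric experiment. Fix a voter $i$ with $|A_i| = \ell$. A $k$-committee $W$ corresponds to drawing $k$ candidates (without replacement) from the $m$ total, among which $\ell$ are ``successes'' (the approved ones) and $m - \ell$ are ``failures.'' Under this correspondence, the number $|W \cap A_i|$ of approved candidates in $W$ has exactly the hypergeometric distribution with parameters $(m, \ell, k)$.

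The key step is the counting identity: the number of $k$-committees $W$ with $|W \cap A_i| = j$ is $\binom{\ell}{j}\binom{m-\ell}{k-j}$, obtained by choosing $j$ candidates from $A_i$ and $k - j$ from $M \setminus A_i$. Dividing by the total $\binom{m}{k}$ gives exactly $h(m, \ell, k; j)$. The voter $t$-approves $W$ iff $|W \cap A_i| \ge t$, so the fraction of $t$-approved $k$-committees is
\[
\sum_{j=t}^{k} \frac{\binom{\ell}{j}\binom{m-\ell}{k-j}}{\binom{m}{k}} = \sum_{j=t}^{k} h(m, \ell, k; j) = 1 - \sum_{j=0}^{t-1} h(m, \ell, k; j) = 1 - H(m, \ell, k; t-1),
\]
as desired.

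There is essentially no obstacle here; the proof is a routine combinatorial identity. The only subtle point worth stating explicitly is the matching of parameters: the population size is $m$, the population success size is $\ell$ (the number of candidates the voter approves), and the draw size is $k$ (the committee size). One might also note the edge cases $j > \ell$ or $k - j > m - \ell$ are automatically handled since the corresponding binomial coefficients vanish, and the identity remains valid for any $t \in \{1, \ldots, k\}$.
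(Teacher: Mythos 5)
Your proof is correct and follows essentially the same route as the paper's: both identify a uniformly random $k$-committee with a hypergeometric draw (population $m$, successes $\ell$, draws $k$) and conclude by complementing the event $|W\cap A_i|\le t-1$. You merely spell out the counting identity behind the pmf explicitly, which the paper leaves implicit.
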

\begin{proof}
    This follows from the definition of the hypergeometric distribution. 
    Define the population to be the $m$ candidates and a ``success'' to be the $\ell$ candidates approved by the voter.
    Then, if we pick a $k$-committee uniformly at random, the probability that the voter approves at most $t - 1$ candidates on the committee is $H(m, \ell, k; t - 1)$.
    Therefore, the voter $t$-approves the committee with probability $1 - H(m, \ell, k; t - 1)$. 
\end{proof}

While the bound in \Cref{thm:multiwinner-lowerbound} may not be the most natural-looking, note that as $m$ grows large, the hypergeometric distribution with parameters $m, \ceiling{sm}, k$ approaches a binomial $\text{Bin}(k, s)$ distribution. Hence, the \FVR bound approaches
\[
\sum_{t' = 0}^{t - 1} \binom{k}{t'} s^{t'}(1 - s)^{k-t'}.
\]
Although still unwieldy, the binomial limit gives us a pathway to gain more intuition on this bound. 
By standard concentration inequalities, if $s \ll t/k$, then this bound is exponentially close to $1$, whereas if $s \gg t/k$, then it is exponentially close to $0$. Further, we can see that it simplifies in some special cases.
Indeed, when $k = 1$ (so $t = 1$), the hypergeometric bound simplifies to $1 - \frac{\ceiling{sm}}{m}$. 
This is always at most $1 -s$ and approaches $1 - s$ as $m$ grows; hence, it coincides with the single-winner bound (\Cref{thm:singlewinner-lowerbound}). 
When we fix $t = 1$ but let $k$ be arbitrary, the limiting binomial bound simplifies to $(1 - s)^k$. 
If we fix $t = k$, it simplifies to $1 - s^k$. 
For finite $m$, the hypergeometric distribution is slightly  more concentrated than the binomial distribution---for $t \ll sk$, the binomial bound is an overestimate, while for $t \gg sk$, it is an underestimate. 
Specifically, for $t = 1$ the hypergeometric bound is always less than the binomial limit, and for $t = k$ it is always more.
This aligns with the fact that when sampling without replacement, getting $0$ success is less likely than with replacement. 
Similarly, getting $k$ (out of $k$) successes is less likely without replacement than with replacement.

\subsection{Simultaneously-Optimal Rules}
We will say that a $k$-committee rule $R$ is \emph{\FVR-optimal} for $(s, t)$ if for all $m$, $\FVRMath(R, s, t, m) = H(m, \ceiling{sm}, k; t-1)$.
As with the single-winner setting, \Cref{thm:multiwinner-lowerbound} implies that for any fixed $(s, t)$, we can find an \FVR-optimal rule for this combination. 
However, what if we wish to attain optimality for multiple choices of $(s,t)$ simultaneously? 
We first show that for a fixed $t$, we can achieve optimality over all $s$ via a simple algorithm. 
Given $k$ and $t$, we construct what we call a \emph{$(k, t)$-expanded instance} as follows. 
The candidates in this expanded instance are all $\binom{m}{k}$ possible committees of size~$k$. 
Each voter approves a committee $W$ exactly when she $t$-approves $W$ in the original instance. 
We have the following.
\begin{theorem}\label{thm:ropt-extended} 
Fix $k$ and $t$.
    Let $R^{k, t}_{\emph{expanded}}$ be the rule that runs $\Ropt$ on the $(k, t)$-expanded instance and selects the winning committee. Then, for all $s \in (0, 1)$, $R^{k, t}_{\emph{expanded}}$ is \FVR-optimal for $(s, t)$.
\end{theorem}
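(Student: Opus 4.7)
The lower bound is already provided by Theorem~\ref{thm:multiwinner-lowerbound}, so I only need to establish the matching upper bound: for every $s\in(0,1)$ and every $m$, any group of strictly more than $H(m,\lceil sm\rceil,k;t-1)\cdot n$ voters who are all $s$-flexible in the original instance must contain a voter who $t$-approves the committee chosen by $R^{k,t}_{\text{expanded}}$. The plan is to reduce this to the single-winner \FVR-optimality of $\Ropt$ (Theorem~\ref{thm:singlewinner-optimal}), invoked on the $(k,t)$-expanded instance.

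The first step is to compute each voter's flexibility in the expanded instance. The expanded instance has $\binom{m}{k}$ candidates---one per $k$-committee---and voter~$i$ approves a committee there exactly when she $t$-approves it in the original. By Lemma~\ref{lem:t-approve}, this is a $\bigl(1-H(m,|A_i|,k;t-1)\bigr)$-fraction of all committees, so her expanded flexibility is
\[
f_i^{\text{exp}} \;=\; 1 - H(m,|A_i|,k;t-1).
\]
Using the standard monotonicity of the hypergeometric CDF in the population success size---adding successes to the population stochastically shifts the draw upward, making the event ``at most $t-1$ successes drawn'' less likely---$H(m,\ell,k;t-1)$ is nonincreasing in $\ell$. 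Hence any voter with $|A_i|\ge\lceil sm\rceil$ satisfies
\[
f_i^{\text{exp}} \;\ge\; 1 - H(m,\lceil sm\rceil,k;t-1) \;=:\; s',
\]
so every voter who is $s$-flexible in the original instance is $s'$-flexible in the expanded one.

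With this translation, the conclusion follows immediately. Let $B$ be any group of more than $H(m,\lceil sm\rceil,k;t-1)\cdot n=(1-s')n$ voters who are $s$-flexible in the original instance. Each of them is $s'$-flexible in the expanded instance, so Theorem~\ref{thm:singlewinner-optimal}, which states that $\FVRMath(\Ropt,s')=1-s'$, guarantees that some voter in $B$ approves the committee chosen by $\Ropt$ on the expanded instance. By construction of the expansion, approval in the expanded instance is exactly $t$-approval in the original, completing the argument.

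The proof is essentially a clean reduction, so the main challenge is conceptual rather than technical: one has to identify the right transformation and verify that $s$-flexibility in the original lifts to $s'$-flexibility in the expansion via the monotonicity of $H$. The degenerate values $s'=0$ (which arises when $\lceil sm\rceil<t$, making the target bound~$n$ and the condition vacuous) and $s'=1$ (when every $s$-flexible voter $t$-approves every committee) technically lie outside the range of Theorem~\ref{thm:singlewinner-optimal}, but the conclusion is trivial in both cases and does not disrupt the reduction.
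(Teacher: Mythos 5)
Your proof is correct and follows essentially the same route as the paper's: use Lemma~\ref{lem:t-approve} (plus monotonicity of $H$ in the population success size) to show that $s$-flexible voters are $\bigl(1-H(m,\lceil sm\rceil,k;t-1)\bigr)$-flexible in the expanded instance, then invoke Theorem~\ref{thm:singlewinner-optimal}. Your explicit treatment of the degenerate thresholds $s'\in\{0,1\}$ is a small added care the paper omits, but the argument is the same.
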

This theorem shows that we can directly generalize our single-winner rule to be optimal (in a certain sense) in the multi-winner setting.
However, a downside of $R^{k, t}_{\text{expanded}}$ is that, in its current implementation, it runs in time exponential in $k$, as we must construct an instance with $\binom{m}{k}$ candidates. 
Hence, a natural question is whether we can obtain the same guarantees with a polynomial-time algorithm.
We show next that this is indeed possible using a greedy algorithm, which effectively applies the method of conditional expectations, presented as \Cref{alg:sequential}.

\begin{algorithm}[htb]
    \caption{$(k, t)$ Sequential Algorithm}
    \label{alg:sequential}
    \begin{algorithmic}
    \State $C \gets \emptyset$
        \For{$j = 1, \ldots, k$}
            \For{$i = 1, \ldots, n$ such that $A_i\setminus C\ne\emptyset$}
                \State $w_i \gets \frac{h(m - j - 1,\, |A_i \setminus C| - 1,\, k - j;\, t - 1 - |A_i \cap C|)}{H(m,\, |A_i|,\, k;\, t - 1)}$
            \EndFor
            \State Let $a_j \in M \setminus C$ be a candidate maximizing $\sum_{i \in N_a} w_i$, breaking ties arbitrarily.
            \State $C \gets C \cup \set{a_j}$
        \EndFor
        \Return $C$
    \end{algorithmic}
\end{algorithm}

At each step, the algorithm chooses a candidate that maximizes a certain score among the remaining candidates, and adds it to the committee.
However, unlike in our previous scoring rules, the weight given to each voter $i$ now depends not only on $i$'s flexibility, but also on the size of the current committee and the number of candidates in that committee approved by~$i$. 
The algorithm yields the following guarantee.

\begin{theorem}\label{thm:sequential}
    Fix $k$ and $t$. Let $R^{k, t}_{\emph{alg}}$ be the rule induced by \Cref{alg:sequential}. Then, for all $s \in (0, 1)$, $R^{k, t}_{\emph{alg}}$ is \FVR-optimal for $(s, t)$.
\end{theorem}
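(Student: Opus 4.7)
The plan is to apply the method of conditional expectations, using a potential inspired by the single-winner analysis of $\Ropt$ in \Cref{thm:singlewinner-optimal}. For a partial committee $C\subseteq M$, I set
\[
\Phi_i(C) \;=\; H\bigl(m - |C|,\; |A_i \setminus C|,\; k - |C|;\; t - 1 - |A_i \cap C|\bigr),
\]
the probability that a uniformly random completion of $C$ to a $k$-committee is $t$-disapproved by voter~$i$, and define the potential
\[
\Psi(C) \;=\; \sum_{i\in N} \frac{\Phi_i(C)}{H(m,|A_i|,k;t-1)},
\]
omitting voters with zero denominator (those voters $t$-approve every committee). Note $\Psi(\emptyset)\le n$.

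First, I would show that $\Psi$ does not increase on average across one iteration. The standard hypergeometric recursion
\[
H(P,K,L;T) \;=\; \tfrac{K}{P}\,H(P{-}1,K{-}1,L{-}1;T{-}1) \;+\; \tfrac{P-K}{P}\,H(P{-}1,K,L{-}1;T),
\]
obtained by conditioning on whether the first draw is a success, says precisely that the uniform average over $a\in M\setminus C$ of $\Phi_i(C\cup\{a\})$ equals $\Phi_i(C)$. Summing over $i$, the uniform average of $\Psi(C\cup\{a\})$ equals $\Psi(C)$, so some $a$ satisfies $\Psi(C\cup\{a\})\le \Psi(C)$.

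The main technical step is to verify that \Cref{alg:sequential} actually picks such an $a$---in fact, an $a$ minimizing $\Psi(C\cup\{a\})$. At iteration~$j$, write $P=m-j+1$, $L=k-j+1$, $K_i=|A_i\setminus C|$, $T_i=t-1-|A_i\cap C|$. Peeling off the $a$-independent part of $\Psi(C\cup\{a\})$, its minimization is equivalent to maximizing
\[
\sum_{i\in N_a} \frac{H(P{-}1,K_i,L{-}1;T_i) \;-\; H(P{-}1,K_i{-}1,L{-}1;T_i{-}1)}{H(m,|A_i|,k;t-1)}.
\]
A short telescoping argument using Pascal's identity collapses the numerator to $\binom{K_i-1}{T_i}\binom{P-1-K_i}{L-1-T_i}/\binom{P-1}{L-1}$, whereas the algorithm's corresponding numerator $h(m-j-1,K_i-1,k-j;T_i)$ equals the same product of binomials divided by $\binom{P-2}{L-1}$. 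Hence the two differ only by the factor $\binom{P-1}{L-1}/\binom{P-2}{L-1}=(P-1)/(P-L)$, which depends on $j,m,k$ but \emph{not} on~$i$. So the algorithm's argmax coincides with the true minimizer, confirming $\Psi(C\cup\{a_j\})\le \Psi(C)$ at each step. This combinatorial identity is the main obstacle; the rest is bookkeeping.

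Chaining these observations, the algorithm produces a final committee $C_k$ with $\Psi(C_k)\le \Psi(\emptyset)\le n$. Since $\Phi_i(C_k)\in\{0,1\}$ is the indicator that voter $i$ $t$-disapproves $C_k$,
\[
\sum_{i\text{ $t$-disapproves }C_k} \frac{1}{H(m,|A_i|,k;t-1)} \;\le\; n.
\]
For an $s$-flexible voter, $|A_i|\ge \lceil sm\rceil$ together with the monotonicity of $H$ in its second argument gives $H(m,|A_i|,k;t-1)\le H(m,\lceil sm\rceil,k;t-1)$. Therefore the set $B$ of $s$-flexible voters who $t$-disapprove $C_k$ satisfies $|B|/n \le H(m,\lceil sm\rceil,k;t-1)$, matching the lower bound of \Cref{thm:multiwinner-lowerbound} and yielding \FVR-optimality for every $s\in(0,1)$.
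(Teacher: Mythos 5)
Your proposal is correct and follows essentially the same route as the paper: your potential $\Psi(C)$ is exactly the conditional expectation $\E[\sc(W)\mid C\subseteq W]$ of the score $\sc(W)=\sum_{i:\,|A_i\cap W|<t} 1/H(m,|A_i|,k;t-1)$ that the paper uses, the martingale/averaging step is the same, and your CDF-difference identity $H(P{-}1,K,L{-}1;T)-H(P{-}1,K{-}1,L{-}1;T{-}1)=\binom{K-1}{T}\binom{P-K-1}{L-1-T}/\binom{P-1}{L-1}$ is the same combinatorial fact the paper establishes by counting completions $T$ with $|A_i\cap T|=t-1-|A_i\cap C_{j-1}|$, so your verification that the algorithm's weights $w_i$ agree with the true marginals up to an $i$-independent positive factor matches the paper's argument. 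The concluding step (monotonicity of $H$ in its second argument to pass from $|A_i|\ge\lceil sm\rceil$ to the bound $|B|\le n\cdot H(m,\lceil sm\rceil,k;t-1)$) is also identical.
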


To gain some intuition about the algorithm, note that for the $(k, t)$-expanded instance in \Cref{thm:ropt-extended} (and in the proof of \Cref{thm:singlewinner-optimal}), we need not choose a committee \emph{maximizing} the $w$-score (for the optimal $w(f) = \frac{1}{1 - f}$); instead, it is sufficient to choose one with \emph{above-average} score.  
In other words, if we pick a committee $W$ uniformly at random,  $\mathbb{E}[\sc^w(W)]$ would be sufficient for optimal guarantees. 
However, a priori, it is possible that only a single committee $W^*$ has $\sc(W^*) \ge \mathbb{E}[\sc(W)]$ while all others are strictly below, where we drop the superscript $w$ for convenience. 
To get around this and ``round'' the random committee into a deterministic one, we greedily construct $C = \set{a_1, \ldots, a_k}$ such that at each step, $\mathbb{E}[\sc(W) \mid \set{a_1, \ldots, a_j} \subseteq W] \ge \mathbb{E}[\sc(W)]$. 
The weighting rule we use essentially tells us the marginal gain to the score of permanently adding each possible candidate $a$. 
On average, adding each does not change the expectation, so adding one with the largest marginal gain can only improve the conditional expectation, thereby giving the desired bound.

   Next, we show that unlike the positive results we found for simultaneous optimality over $s$ in \Cref{thm:ropt-extended,thm:sequential}, the same does not hold for $t$.
   
   \begin{theorem}\label{thm:impossibility-simultaneous-t} For any $k \ge 2$ and $t\le\lfloor k/2\rfloor$, no rule is simultaneously FVR-optimal for $(1/2,t)$ and $(1/2,k)$.
   \end{theorem}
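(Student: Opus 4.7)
The plan is to exhibit, for $m = 2k$ candidates, a single instance on which every $k$-committee violates at least one of the two \FVR-optimality bounds, so that no rule can be simultaneously \FVR-optimal for $(1/2, t)$ and $(1/2, k)$. Partition the candidate set into two halves $M_1, M_2$ of size $k$, and let half of the (even number of) voters have approval set $M_1$ while the remaining half have approval set $M_2$. Each voter approves exactly $k = m/2$ candidates and is therefore $1/2$-flexible, so the construction is eligible for both conditions.

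Given any $k$-committee $W$, set $k_i := |W \cap M_i|$. Since $k_1 + k_2 = k = |M_1| = |M_2|$, one of the following must hold. Case (i): $W \in \{M_1, M_2\}$; then the half of voters approving the opposite block have $|A_i \cap W| = 0 < t$ and therefore $t$-disapprove $W$, so the $t$-disapproving fraction equals $1/2$. Case (ii): $1 \le k_1, k_2 \le k - 1$; then every voter has $|A_i \cap W| \in \{k_1, k_2\} \subseteq \{1, \ldots, k-1\}$, so every voter $k$-disapproves $W$, giving a $k$-disapproving fraction of $1$.

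To finish, I would compare these fractions to the optimal bounds at $m = 2k$. By the Vandermonde identity, $H(2k, k, k; t-1) = \sum_{j=0}^{t-1}\binom{k}{j}^2 / \binom{2k}{k}$; together with the symmetry $\binom{k}{j} = \binom{k}{k-j}$, this partial sum is strictly less than $1/2$ whenever $t - 1 < k/2$, i.e., $t \le \lfloor k/2 \rfloor$. Hence in case (i) the fraction $1/2$ strictly exceeds $H(2k, k, k; t-1)$, ruling out \FVR-optimality for $(1/2, t)$. Similarly, $H(2k, k, k; k-1) = 1 - 1/\binom{2k}{k} < 1$ whenever $k \ge 2$, so case (ii) rules out \FVR-optimality for $(1/2, k)$. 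Whichever case the chosen committee falls into, at least one optimality condition is violated, proving the theorem.

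The main technical obstacle is confirming the strict inequality $H(2k, k, k; t-1) < 1/2$ for every admissible $t$. The parity cases $k$ even (where one peels off the central term $\binom{k}{k/2}^2$ before symmetrizing) and $k$ odd (where the sum bisects exactly and one discards $\binom{k}{(k-1)/2}^2$) require slightly different arithmetic, but in both cases truncating at $t - 1 \le \lfloor k/2 \rfloor - 1$ omits at least one strictly positive binomial-square term from the upper half, so the bound is strict.
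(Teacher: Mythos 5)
Your proposal is correct and follows essentially the same route as the paper: the same $m=2k$ instance with two disjoint blocks of size $k$ approved by two halves of the electorate, the observation that a committee either equals one block (violating $(1/2,t)$-optimality since $1/2 > H(2k,k,k;t-1)$) or straddles both (violating $(1/2,k)$-optimality since every voter then $k$-disapproves while $H(2k,k,k;k-1)<1$), and the symmetry argument for $H(2k,k,k;t-1)<1/2$. Your Vandermonde computation just makes explicit the strict-inequality step that the paper dispatches with "by symmetry."
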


\subsection{Compatibility with Other Notions of Representation}

Finally, we address the compatibility between \FVR and other notions of representation. 
While several such notions have been proposed in the literature \citep{ABCE+17,SELF+17,PPS21,BP23}, one of the weakest is \emph{justified representation (JR)}.
We show that \FVR-optimality is incompatible even with JR; this implies a similar incompatibility with stronger representation notions (see the discussion in \Cref{sec:additional-discussion}).

 Recall that a committee $W$ of size $k$ is said to satisfy JR if there is no \emph{blocking coalition} $T$ of voters which fulfills the following properties:
\begin{enumerate}
    \item \emph{Large}: $|T| \ge n/k$;
    \item \emph{Cohesive}: $\bigcap_{i \in T} A_i \ne \emptyset$;
    \item \emph{Unrepresented}: $W \cap A_i = \emptyset$ for all $i \in T$.
\end{enumerate}
We establish the following incompatibility.
\begin{theorem}\label{thm:jr-impossibility}
For each $k > 1$, there exists $s$ such that no $k$-committee rule is \FVR-optimal for $(s,1)$ and satisfies JR at the same time.
\end{theorem}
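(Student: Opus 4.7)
My plan is to exhibit, for each $k \ge 2$, a flexibility threshold $s \in (0,1)$ and an explicit instance in which the JR requirement pins down $k-1$ of the $k$ committee slots, while the $s$-flexible voters are spread out enough that no choice of the last slot can keep the fraction of $s$-flexible voters who $1$-disapprove the chosen committee below $H(m, \ceiling{sm}, k; 0)$.

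I would first choose $s$ in the (nonempty, for $k \ge 2$) interval $\left(1 - k^{-1/(k-1)},\, 1\right)$, so that $(1-s)^{k-1} < 1/k$, and hence $(1-s)^k < (1-s)/k$. Then, for large $n, m$ with $k \mid n$, I would build an instance with $k-1$ ``pinning'' groups $G_1, \ldots, G_{k-1}$, each of size exactly $n/k$, whose voters all approve only a single distinguished candidate $a_i$, together with one ``flexible'' group $G$ of size $n/k$, in which each voter approves exactly $\ceiling{sm}$ candidates drawn from $M \setminus \{a_1,\ldots,a_{k-1}\}$, with approvals distributed as evenly as possible. Every voter in $G$ is $s$-flexible, and the even distribution ensures that each candidate in $M \setminus \{a_1,\ldots,a_{k-1}\}$ is approved by at most $sn/k + O(1)$ voters in $G$.

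The JR analysis then runs as follows. Each $G_i$ is a cohesive coalition of size $n/k$ whose members approve only $a_i$, forcing $a_i \in W$ for every $i \in [k-1]$. No other coalition is JR-blocking: the only potential cohesive coalition of size $\ge n/k$ contained in $G$ is $G$ itself, and for $s < 1$ each candidate is approved by strictly fewer than $n/k$ voters in $G$, so $G$ is not cohesive; mixed coalitions are not cohesive either, as $G_i$-voters approve only $a_i$ while $G$-voters disapprove $a_i$. Hence JR allows exactly the committees $W = \{a_1,\ldots,a_{k-1},x\}$ with $x \in M \setminus \{a_1,\ldots,a_{k-1}\}$. For any such $W$, at least $(1-s)n/k - O(1)$ voters in $G$ disapprove every member of $W$, giving a disapproving fraction of at least $(1-s)/k - O(1/n)$. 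Bounding the hypergeometric expression factor by factor (using $\ceiling{sm} \ge sm$) gives $H(m, \ceiling{sm}, k; 0) \le (1-s)^k$. By the choice of $s$, the gap $(1-s)/k - (1-s)^k$ is strictly positive, so for $n, m$ sufficiently large the disapproving fraction strictly exceeds $H(m, \ceiling{sm}, k; 0)$, forcing $\FVRMath(R, s, 1, m) > H(m, \ceiling{sm}, k; 0)$ for any JR-satisfying $R$, contradicting \FVR-optimality.

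The main obstacle will be the even-distribution step: I must simultaneously ensure that (i) no candidate in $M \setminus \{a_1,\ldots,a_{k-1}\}$ is approved by $\ge n/k$ voters in $G$, so that JR is not triggered by a subcoalition inside $G$, and (ii) no candidate is approved by much more than $sn/k$ voters in $G$, so that the disapproving fraction for every allowable last slot $x$ stays close to $(1-s)/k$. Both should follow from standard averaging and pigeonhole bounds once $s$ and $k$ are fixed, with only mild bookkeeping of rounding errors, which vanish relative to the positive gap $(1-s)/k - (1-s)^k$ as $n, m \to \infty$.
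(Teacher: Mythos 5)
Your proof is correct and follows essentially the same approach as the paper's: $k-1$ singleton-approving groups of size $n/k$ use JR to pin down $k-1$ of the committee seats, leaving a single seat that cannot cover a flexible group whose approvals are spread evenly over the remaining candidates. The only substantive difference is the parameterization of the flexible group---you fix $s$ close to $1$ (so that $(1-s)^{k-1}<1/k$) and compare the disapproving fraction $(1-s)/k$ against the optimum $H(m,\lceil sm\rceil,k;0)\le(1-s)^k$, whereas the paper takes $s=(m-k)/m$ with each flexible voter omitting one distinct candidate and compares $1/(k(m-k+1))$ against $1/\binom{m}{k}$; both gaps are valid and close the argument.
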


\section{Discussion}

In this paper, we have introduced the notion of voter flexibility and a corresponding objective, 
\emph{\FVRFull (\FVR)}, under approval voting. 
In the single-winner setting, we present a simple rule that is simultaneously optimal for all flexibility thresholds.
In the multi-winner setting, while some impossibilities exist, we still find a polynomial-time rule which makes similar guarantees, once the definition of ``approving a committee'' (i.e., the parameter $t$) is fixed.

One may wonder what the practical implications of these objectives are. 
That is, should we care more about voters who are flexible? 
The answer certainly depends on the context. 
For example, consider an election in which Party A runs $20$ candidates whereas Party B runs only $2$. 
In this setting, Party~A voters are not necessarily ``more flexible'' than Party~B voters, and it may be preferable to simply take the approval winner.
On the other hand, consider a setting where a community is deciding on one of several projects to fund. 
In this case, we may wish to reward voters who are more flexible. 
Indeed, the flexibility of such voters may indicate that they find their disapproved projects strongly unacceptable.
Furthermore, encouraging flexibility can help achieve consensus, an inherently difficult task if few voters are flexible and no candidate has a reasonably-sized base.

More broadly, we believe that flexibility is an interesting foundational metric in its own right. 
Even when the chosen rules are not \FVR-optimal, \FVRFull can be used to compare various candidates and weigh them according to their total support. 
Applied in a way suitable to the context, flexibility measures can help ensure a more desirable outcome in approval-based voting scenarios.

\section*{Acknowledgments}

This work was partially supported by the National Science Foundation under grants IIS-2147187, IIS-2229881, and CCF-2007080, by the Office of Naval Research under grant N00014-20-1-2488, by the Singapore Ministry of Education under grant number MOE-T2EP20221-0001, and by an NUS Start-up Grant. 
We thank the anonymous reviewers for their valuable feedback.

%% The file named.bst is a bibliography style file for BibTeX 0.99c
\bibliographystyle{named}
\bibliography{abb,ijcai25}

\appendix

\onecolumn

\vspace{2mm}

\section{Omitted Proofs}
\label{app:omitted}

\subsection{Proof of \Cref{thm:power}}

Fix $p > 0$ and $s\in (0,1)$, and let $g(s,p) := \frac{1}{1 + \frac{(s(1+p))^{1+p}}{p^p}}$.
    
    First, we show that $\FVRMath(R_{p\text{-power}},s) \le g(s,p)$.   
    Take an arbitrary instance, and let $a^*$ be the candidate chosen by $R_{p\text{-power}}$.
    Let $B := \set{i \in N \mid |A_i| \ge sm \text{ and } a^* \notin A_i}$ be the set of $s$-flexible voters who disapprove $a^*$, and let $r := |B|/n$.
    It is sufficient to show that $r \le g(s,p)$.

    Since $N_{a^*} \subseteq N \setminus B$, it holds that $\sc_p(a^*) \le \sum_{i \in N\setminus B} f_i^p$. 
    Let us consider the average score of all candidates, $\frac{1}{m} \sum_{a \in M} \sc_p(a)$. 
    We have that
    \begin{align*}
        \frac{1}{m} \sum_{a \in M} \sc_p(a)
        &= \frac{1}{m} \sum_{a \in M} \sum_{i \in N_a} f_i^p\\
        &= \frac{1}{m^{1+p}}\sum_{a \in M} \sum_{i \in N_a} |A_i|^p\\
        &= \frac{1}{m^{1+p}}\sum_{i \in N} \sum_{a \in A_i}  |A_i|^p
        = \frac{1}{m^{1+p}}\sum_{i \in N}  |A_i|^{1+p}
        = \sum_{i \in N} f_i^{1+p}
        = \sum_{i \in B} f_i^{1+p} + \sum_{i \in N \setminus B} f_i^{1+p}.
    \end{align*}
    
    For each $i \in B$, it holds by definition that $f_i \ge s$, so $\sum_{i \in B} f_i^{1+p} \ge |B| \cdot s^{1+p} = rn s^{1+p}$.
    Since $a^*$ is a candidate with the maximal score, it must hold that $\sc(a^*) 
    \ge \frac{1}{m} \sum_{a \in M} \sc_p(a)$.
    Therefore,
    \begin{align*}
    \sum_{i \in N\setminus B} f_i^p
    \ge \sc(a^*) 
    \ge \frac{1}{m} \sum_{a \in M} \sc_p(a) 
    = \sum_{i \in B} f_i^{1+p} + \sum_{i \in N \setminus B} f_i^{1+p}
    \ge rns^{1+p} + \sum_{i \in N \setminus B} f_i^{1+p}.
    \end{align*}
    That is, $rns^{1+p} \le \sum_{i\in N\setminus B} f_i^p (1-f_i)$.
    Note that within the range $x\in [0,1]$, basic calculus shows that the function $x^p(1-x)$ is maximized at $x = \frac{p}{1+p}$, so we get
    \begin{align*}
    rns^{1+p} 
    \le \sum_{i\in N\setminus B} \left(\frac{p}{1+p}\right)^p\frac{1}{1+p} 
    = \sum_{i\in N\setminus B} \frac{p^p}{(1+p)^{1+p}}
    = (1-r)n \cdot \frac{p^p}{(1+p)^{1+p}}.
    \end{align*}
    Cancelling $n$ from both sides and solving for $r$, we obtain
    \begin{align*}
    r \le \frac{\frac{p^p}{(1+p)^{1+p}}}{s^{1+p} + \frac{p^p}{(1+p)^{1+p}}}
    = \frac{1}{1 + \frac{(s(1+p))^{1+p}}{p^p}} = g(s,p).
    \end{align*}
    This proves that $\FVRMath(R_{p\text{-power}},s) \le g(s,p)$.

    Next, we show that $\FVRMath(R_{p\text{-power}},s) \ge g(s,p)$.
    Take any $r < g(s,p)$.
    It suffices to show that there exists an instance where a group of at least $rn$ voters are all $s$-flexible but do not approve the candidate chosen by the $p$-power scoring rule.
    Consider an instance with sufficiently large $n$ and $m$ (to be made more precise later), and let $a^*$ be a candidate.
    Let $C$ be a group of $\ceiling{rn}$ voters, and assume that each voter in $C$ approves exactly $\ceiling{sm}$ candidates in $M\setminus\{a^*\}$, with the approvals distributed among these candidates as equally as possible.
    Further, assume that each voter in $N\setminus C$ approves $\big\lceil\frac{p}{1+p}\cdot m\big\rceil - 1$ candidates in $M\setminus\{a^*\}$, with the approvals distributed among these candidates as equally as possible; moreover, each of these voters also approves $a^*$.

    In this instance, we have    
    \begin{align*}
    \sc_p(a^*) 
    = \sum_{i \in N\setminus C} f_i^p
    = (n-\ceiling{rn})\cdot\left(\frac{\big\lceil\frac{p}{1+p}\cdot m\big\rceil}{m}\right)^p
    = \floor{(1-r)n}\cdot\left(\frac{\big\lceil\frac{p}{1+p}\cdot m\big\rceil}{m}\right)^p.
    \end{align*}
    On the other hand, each candidate besides $a^*$ is approved by at most $\left\lceil{\frac{\ceiling{rn}\cdot \ceiling{sm}}{m-1}}\right\rceil$ voters in $C$, and approved by at most $\bigg\lceil{\frac{\floor{(1-r)n}\cdot \big(\big\lceil\frac{p}{1+p}\cdot m\big\rceil-1\big)}{m-1}}\bigg\rceil$ voters in $N\setminus C$.
    Hence, the score of each such candidate $a$ is at most
    \begin{align*}
    \sc_p(a) 
    \le \left\lceil{\frac{\ceiling{rn}\cdot \ceiling{sm}}{m-1}}\right\rceil \cdot\left(\frac{\ceiling{sm}}{m}\right)^p 
    + \Bigg\lceil{\frac{\floor{(1-r)n}\cdot \big(\big\lceil\frac{p}{1+p}\cdot m\big\rceil-1\big)}{m-1}}\Bigg\rceil \cdot \left(\frac{\big\lceil\frac{p}{1+p}\cdot m\big\rceil}{m}\right)^p.
    \end{align*}
    We will show that $\sc_p(a^*) > \sc_p(a)$ when $n$ and $m$ are large enough.
    This is sufficient to complete the proof, because it implies that the rule $R_{p\text{-power}}$ selects the candidate $a^*$, which is approved by none of the voters in $C$.

    We have
    \begin{align*}
    \sc_p(a^*) - \sc_p(a) 
    &\ge \floor{(1-r)n}\cdot\left(\frac{\big\lceil\frac{p}{1+p}\cdot m\big\rceil}{m}\right)^p\\
    &\quad- \left\lceil{\frac{\ceiling{rn}\cdot \ceiling{sm}}{m-1}}\right\rceil \cdot\left(\frac{\ceiling{sm}}{m}\right)^p 
    - \Bigg\lceil{\frac{\floor{(1-r)n}\cdot \big(\big\lceil\frac{p}{1+p}\cdot m\big\rceil-1\big)}{m-1}}\Bigg\rceil \cdot \left(\frac{\big\lceil\frac{p}{1+p}\cdot m\big\rceil}{m}\right)^p.
    \end{align*}
    As $m$ grows, $\frac{\big\lceil\frac{p}{1+p}\cdot m\big\rceil}{m}$ and $\frac{\big\lceil\frac{p}{1+p}\cdot m\big\rceil-1}{m-1}$ converge to $\frac{p}{1+p}$, while $\frac{\ceiling{sm}}{m}$ and $\frac{\ceiling{sm}}{m-1}$ converge to $s$.
    In this case, the right-hand side of the inequality above becomes
    \begin{align*}
    \floor{(1-r)n}\cdot\left(\frac{p}{1+p}\right)^p  
    - \ceiling{\ceiling{rn}s}\cdot s^p 
    - \left\lceil\floor{(1-r)n}\cdot \frac{p}{1+p}\right\rceil \cdot \left(\frac{p}{1+p}\right)^p.
    \end{align*}
    It suffices to show that this expression is positive when $n$ is large enough.
    Note that the expression is at least
    \begin{align*}
    &((1-r)n - 1)\cdot\left(\frac{p}{1+p}\right)^p 
    - (((rn+1)s)+1)\cdot s^p
    - \left((1-r)n\cdot \frac{p}{1+p} + 1\right) \cdot \left(\frac{p}{1+p}\right)^p \\
    &= n\left[(1-r)\left(\frac{p}{1+p}\right)^p - rs^{1+p} - (1-r)\left(\frac{p}{1+p}\right)^{p+1}\right]
    - \left(\frac{p}{1+p}\right)^p - (s+1)s^p - \left(\frac{p}{1+p}\right)^p\\
    &= n\left[(1-r)\cdot\frac{p^p}{(1+p)^{1+p}}-rs^{1+p}\right]
    - 2\left(\frac{p}{1+p}\right)^p - (s+1)s^p.    
    \end{align*}
    It remains to show that the expression inside the brackets (attached to $n$) is positive.
    This expression is positive if and only if $r\left(\frac{p^p}{(1+p)^{1+p}} + s^{1+p}\right) < \frac{p^p}{(1+p)^{1+p}}$.
    The inequality holds exactly when 
    \begin{align*}
    r < \frac{\frac{p^p}{(1+p)^{1+p}}}{\frac{p^p}{(1+p)^{1+p}} + s^{1+p}}
    = \frac{1}{1 + \frac{(s(1+p))^{1+p}}{p^p}} = g(s,p),
    \end{align*}
    which is true by assumption.
    This completes the proof. \qed
    
\subsection{Missing Portion of Proof of \Cref{thm:general-weight}}\label{app:general-weight}

We now bound $\FVR(R^w, s)$ from below. 
For $f, f' \in \mathbb{Q}\cap (0, 1)$ such that $w(f) > 0$, let
\[
g(f, f') := \frac{(1 - f) \cdot w(f)}{(1 - f) \cdot w(f) + f' \cdot w(f') }.
\]
Note that this expression is well-defined and positive because $(1 - f) \cdot w(f)> 0$ and $f' \cdot w(f') \ge 0$.
For each pair $f, f'$, we will construct a family of instances such that the fraction of $f'$-flexible voters who disapprove the candidate chosen by $R^w$ becomes arbitrarily close to $g(f, f')$. 
By definition of $\rho$ and $\varphi_s$, there are choices of $f$ and $f' \ge s$ that make this ratio arbitrarily close to
\(
    \frac{\rho}{\rho + \varphi_s},
\)
which establishes the bound. 

Fix such a pair $f,f'$. 
Let $m$ be such that both $\ell := mf$ and $\ell' := mf'$ are integers. 
We will have a special candidate $a^*$, and let $B$ be a set of voters of size $\floor{g(f, f') \cdot n} - m$ (with $n$ sufficiently large that this is nonnegative). 
We will ensure that $a^*$ is the unique candidate with the highest $w$-score, while voters in $B$ are $f'$-flexible and disapprove $a^*$. 
Note that as $n$ grows large, $|B|/n$ approaches the desired bound of $g(f, f')$.

Assume that each voter in $B$ approves $\ell'$ candidates from $M \setminus \set{a^*}$, with the approvals distributed as equally as possible.
Similarly, the voters in $\bar{B} = N\setminus B$ approve $a^*$ along with $\ell - 1$ candidates of $M \setminus \set{a^*}$, again distributed as equally as possible (note that $\ell \ge 1$ because $f > 0$ by assumption). 
Since the voters in $B$ and $\bar{B}$ distribute their approvals as evenly as possible across the candidates in $M \setminus \set{a^*}$, for any such candidate $a$, its score is no more than $w(f') + w(f)$ larger than the average.
Formally,
\[
	\sc^{w}(a) \le \frac{1}{m - 1} \sum_{c \in M \setminus \{a^*\}} \sc^{w}(c) + (w(f') + w(f)).
\]
Hence, it is sufficient to show that
\[
	\sc^{w}(a^*) > \frac{1}{m - 1} \sum_{c \in M \setminus \{a^*\}} \sc^{w}(c) + (w(f') + w(f)).
\]
This is equivalent to showing that
\[
	(m - 1) \sc^{w}(a^*) > \sum_{c \in M \setminus \{a^*\}} \sc^{w}(c) + (m - 1)(w(f') + w(f)).
\]
We have 
\begin{align*}
	(m - 1) \sc^{w}(a^*)
    &= (m - 1) |\bar{B}|\cdot w(f)\\
	&= (\ell - 1) |\bar{B}|\cdot w(f) + (m - \ell)|\bar{B}|\cdot w(f)\\
	&\ge (\ell - 1) |\bar{B}|\cdot w(f) + (m - \ell)\left((1 - g(f, f')) \cdot n + m\right)\cdot w(f)\\
	&= (\ell - 1) |\bar{B}|\cdot w(f) + m \cdot (1 - f)\cdot w(f) \cdot \frac{f' \cdot w(f')}{(1 - f) \cdot w(f) + f' \cdot w(f')} \cdot n  + m \cdot (m - \ell)\cdot w(f)\\
 &= (\ell - 1) |\bar{B}|\cdot w(f) + \ell' \cdot w(f') \cdot g(f, f') \cdot n  + m\cdot (m - \ell) \cdot w(f)\\
 &\ge (\ell - 1) |\bar{B}|\cdot w(f) + \ell' \cdot w(f') \cdot (|B| + m  )  + m\cdot (m - \ell)\cdot w(f)\\
	&= (\ell - 1) |\bar{B}|\cdot w(f) +  w(f') \cdot \ell' \cdot  |B|   + m((m - \ell)\cdot w(f) + \ell \cdot w(f'))\\
	&\ge  (\ell - 1) |\bar{B}|\cdot w(f) +  w(f') \cdot \ell' \cdot  |B|   + m(w(f) + w(f'))\\
    &>  (\ell - 1) |\bar{B}|\cdot w(f) +  w(f') \cdot \ell' \cdot  |B|   + (m - 1)(w(f) + w(f'))\\
    &= \sum_{c \in M \setminus \{a^*\}} \sc^{w}(c) + (m - 1)(w(f') + w(f)),
\end{align*}
where the second-to-last inequality holds because $1\le \ell \le m-1$, and the last inequality holds because $w(f) > 0$.
This completes the proof.
\qed

\subsection{Proof of \Cref{thm:characterization}}

    Fix $w$ and suppose that $R^w$ is \FVR-optimal. 
    By \Cref{thm:general-weight}, this implies that $\frac{\rho}{\rho + \varphi_s} = 1 - s$ for all $s\in (0,1)$, where $\rho = \sup_f (1 - f) \cdot w(f)$ and $\varphi_s = \inf_{f \ge s} f \cdot w(f)$.
    
    First, we show that $w(f) > 0$ for every $f$. 
    Indeed, since $w$ is nontrivial, there exists $f^* \in \mathbb{Q}\cap (0,1)$ for which $w(f^*) > 0$, so we have $\rho \ge w(f^*) \cdot (1 - f^*) > 0$. 
    Further, if $w(f) = 0$ for some $f$, then $\varphi_f \le w(f) \cdot (1 - f) = 0$. Therefore, 
    \[
        \frac{\rho}{\rho + \varphi_f} \ge 1 > 1 - f,
    \]
    a contradiction.
    
    We next show that for all flexibilities $f_1, f_2$, it holds that $(1 - f_2) \cdot w(f_2) = (1 - f_1) \cdot w(f_1)$. 
    Fix $f_1, f_2 \in \mathbb{Q} \cap (0, 1)$. 
    By setting $s = f_1$, we have that
    \[
        \frac{\rho}{\rho + \varphi_{f_1}} = 1 - f_1.
    \]
    Note that the ratio $\frac{a}{a + b}$ is increasing in $a$ and decreasing in $b$. Therefore, replacing $\rho$ by a value at most $\rho$ and replacing $\varphi_{f_1}$ with a value larger than $\varphi_{f_1}$ will only decrease the ratio. In particular, by their definitions, $\rho \ge (1 - f_2) \cdot w(f_2)$ and $\varphi_{f_1} \le f_1 \cdot w(f_1)$. Hence,
    \[
        \frac{(1 - f_2) \cdot w(f_2)}{(1 - f_2) \cdot w(f_2) + f_1 \cdot w(f_1)} \le 1 - f_1.
    \]
    As shown before, each $w(f)$ is positive, so the denominator is positive. Multiplying on both sides, we obtain
    \[
       (1 - f_2) \cdot w(f_2) \le (1 - f_1) \cdot ((1 - f_2) \cdot w(f_2) + f_1 \cdot w(f_1)).
    \]
    Distributing and rearranging yields
    \[
        f_1 \cdot (1 - f_2) \cdot w(f_2) \le f_1 \cdot (1 - f_1) \cdot w(f_1).
    \]
    Since $f_1 > 0$, we can divide both sides by $f_1$ to get
    \[
        (1 - f_2) \cdot w(f_2) \le (1 - f_1) \cdot w(f_1).
    \]
    Redoing the entire argument with $f_1$ and $f_2$ flipped yields the reverse inequality, and hence equality.

    With this in hand, let $c := w(1/2) \cdot (1 - 1/2)$. By plugging in $f$ for $f_1$ and $1/2$ for $f_2$, we immediately get that $w(f) = \frac{c}{1 - f}$ for all $f$. \qed

\subsection{Proof of \Cref{thm:multiwinner-lowerbound}}
    Fix $m$, $k$, $s$, and $t$.
    Consider an instance with $n = \binom{m}{\ceiling{sm}}$ voters, each approving a distinct subset of $\ceiling{sm}$ candidates.
    Note that by symmetry, every $k$-committee $W$ is $t$-disapproved by exactly the same number of voters. 
    By \Cref{lem:t-approve} and an averaging argument, this must be a $H(m, \ceiling{sm}, k; t - 1)$-fraction of voters.
    Hence, no matter which committee is selected, an $H(m, \ceiling{sm}, k; t - 1)$-fraction of voters, all of whom are $s$-flexible, will $t$-disapprove. 
    
    To show tightness, consider a rule that chooses a $k$-committee that is $t$-approved by the highest number of $s$-flexible voters. 
    Note that each $s$-flexible voter approves at least a $(1 - H(m, \ceiling{sm}, k; t - 1))$-fraction of the committees (as approving more than $\ceiling{sm}$ candidates can only increase this value). 
    Hence, by an averaging argument, there must exist a committee approved by at least a $(1-H(m, \ceiling{sm}, k; t - 1))$-fraction of these voters. Therefore, at most an $H(m, \ceiling{sm}, k; t - 1)$-fraction of $s$-flexible voters (and therefore of all voters as well) can $t$-disapprove the chosen committee.
\qed

\subsection{Proof of \Cref{thm:ropt-extended}}
    Fix $k$, $t$, $s$, and an instance with $m$ candidates.
    Let $W$ be the committee chosen by $R^{k, t}_{\text{expanded}}$, and let $B$ be a subset of voters that are $s$-flexible and do not $t$-approve $W$. 
    By \Cref{lem:t-approve}, if a voter is $s$-flexible, then she approves at least a $(1- H(m, \ceiling{sm}, k; t - 1))$-fraction of candidates in the $(k, t)$-expanded instance.
    In other words, the voter is ($1- H(m, \ceiling{sm}, k; t - 1)$)-flexible in the instance on which $\Ropt$ is run. 
    Therefore, \Cref{thm:singlewinner-optimal} guarantees that $|B| \le  H(m, \ceiling{sm}, k; t - 1) \cdot n$.
\qed

\subsection{Proof of \Cref{thm:sequential}}
    Fix $k$, $t$, $m$, and $s$. 
    By \Cref{lem:t-approve}, each voter~$i\in N$ $t$-approves a $(1 - H(m, |A_i|, k; t - 1))$-fraction of $k$-committees. 
    For each $k$-committee $W$, let
    $$\sc(W) :=  \sum_{i : |A_i \cap W| < t} \frac{1}{H(m, |A_i|, k; t- 1)};$$
    note that the sum is taken over all voters who $t$-disapprove $W$.
    Observe that had we run $\Ropt$ on the $(k, t)$-expanded instance, it would choose a committee with the lowest such score. 
    However, a careful inspection of the proof for $\Ropt$ reveals that we do not need to choose a lowest-score committee for the guarantees to hold. 
    Instead, \emph{any} committee with score no greater than the average score already fulfills the guarantee; the minimum is only a convenient choice to ensure this. 
    Formalizing this observation in the multi-winner setting, where the following expectation is taken over all $k$-committees, we have that
    \begin{align*}
        \E[\sc(W)] &= \frac{1}{\binom{m}{k}}\sum_W \sc(W)\\
        &= \frac{1}{\binom{m}{k}} \sum_i \sum_{W: |A_i \cap W| < t} \frac{1}{H(m, |A_i|, k; t- 1)}\\
        &= \frac{1}{\binom{m}{k}} \sum_i \binom{m}{k} \cdot H(m, |A_i|, k; t- 1) \cdot  \frac{1}{H(m, |A_i|, k; t- 1)} = n.
    \end{align*}
    As long as $W$ is such that $\sc(W) \le n$, we claim that it yields \FVR-optimality for $(s,t)$. 
    Indeed, suppose that $B$ is a set of $s$-flexible voters that $t$-disapprove $W$. 
    Then, each voter $i \in B$ contributes $\frac{1}{H(m, |A_i|, k; t- 1)} \ge \frac{1}{H(m, \ceiling{sm}, k; t- 1)}$ points to $\sc(W)$. 
    Since $\sc(W) \le n$, we have $|B| \le n \cdot H(m, \ceiling{sm}, k; t- 1)$.

    It therefore suffices to show that \Cref{alg:sequential} chooses a committee $W$ with $\sc(W) \le n$. 
    Let $a_1, \ldots, a_k$ be the sequence of candidates added to the committee in the $k$ iterations of the loop. 
    For $0\le j\le k$, let $C_j = \set{a_1, \ldots, a_j}$ be the set of the first $j$ candidates, so $C_k$ is the final committee returned by the algorithm. 
    We will show that for each $j\ge 1$, it holds that
    \begin{equation}\label{eq:greedy}
    	a_j \in \argmin_{a \in M \setminus C_{j-1}} \E[\sc(W) \mid C_{j - 1} \cup \set{a} \subseteq W].
    \end{equation}
     In other words, $a_j$ is chosen greedily to minimize $\E[\sc(W) \mid C_j \subseteq W]$. 
     Note that \eqref{eq:greedy} implies that for each $j\ge 1$,
     \[
     	\E[\sc(W) \mid C_j \subseteq W] \le \E[\sc(W) \mid C_{j-1} \subseteq W].
     \]
     Therefore,
     \[
     	\sc(C_k) = \E[\sc(W) \mid C_k \subseteq W] \le \E[\sc(W) \mid C_0 \subseteq W] = \E[\sc(W)] = n,
     \]
     where the first equality holds because $|C_k| = k$ (so conditioning on $C_k \subseteq W$ implies $W = C_k$ with probability $1$), the inequality by a straightforward induction, and the second equality because $C_0 = \emptyset$ (so the conditioning is vacuous).
     
     To complete the proof, it remains to establish \eqref{eq:greedy}. 
     Fix $a \in M \setminus C_{j-1}$ with $a \ne a_j$. 
     Let $C_j' = C_{j - 1} \cup \set{a}$, i.e., the alternate set had we added $a$ instead of $a_j$. 
     We will show that
     \[
     	\E[\sc(W) \mid C_j \subseteq W] \le \E[\sc(W) \mid C'_j \subseteq W].
     \]
     More specifically, we will show that
     \[
     	\E[\sc(W) \mid C_j \subseteq W] - \E[\sc(W) \mid C'_j \subseteq W] \le 0.
     \]
     Note that the conditional distribution of a uniformly selected $W$ conditioned on some set $T \subseteq W$ is simply uniform over the $\binom{m - |T|}{k - |T|}$ sets containing $T$. 
     Hence,
     \[\E[\sc(W) \mid C_j \subseteq W] - \E[\sc(W) \mid C'_j \subseteq W]= \frac{1}{\binom{m - j}{k - j }}\left(\sum_{W:  C_j \subseteq W} \sc(W) - \sum_{W: C'_j \subseteq W} \sc(W)  \right).
     \]
     Since we only wish to show that this quantity is at most $0$, we can ignore the constant in front and show that the difference of sums is nonpositive. 
     We can remove from both sums all committees $W$ that contain both $a_j$ and $a$ to obtain that the difference is equal to
     \[
     	\sum_{\substack{W:  C_j \subseteq W\\a \notin W}} \sc(W) - \sum_{\substack{W:  C'_j \subseteq W\\ a_j \notin W}} \sc(W).
     \]
     An alternative way to sum over the same sets is to simply sum over all sets $T \subseteq M \setminus (C_{j-1} \cup \set{a_j, a})$ with $|T| = k - j$ and output $C_j \cup T$ or $C'_j \cup T$, respectively. 
     For brevity, we will let $R:=  M \setminus (C_{j-1} \cup \set{a_j, a})$ denote the set of these remaining candidates in neither $C_j$ nor $C'_j$.  
     Hence, the above difference is equal to
     \[
     	\sum_{T} (\sc(C_j \cup T) - \sc(C'_j \cup T)),
     \]
     where the sum is taken over all sets $T\subseteq R$ of size $k-j$.
     Expanding the definition of the score, we get that this is equal to
     \begin{equation}\label{eq:indicators}
		\sum_i \left(\frac{1}{H(m, |A_i|, k; t-1)} \cdot\sum_{T}  (\mathbb{I}[|A_i \cap (C_j \cup T)| < t] - \mathbb{I}[|A_i \cap (C'_j \cup T)| < t])\right).
	\end{equation}
    Note that the sets $A_i \cap (C_j \cup T)$ and $A_i \cap (C'_j \cup T)$ differ in size by at most $1$, and the difference is $1$ when exactly one of $a_j$ and~$a$ is contained in $A_i$. 
    More specifically, either $a_j \in A_i$ and $a \notin A_i$ (in which case the first is larger by $1$) or vice versa (in which case the second is larger by $1$). 
    Hence, the only scenarios in which the difference of indicators is nonzero is when exactly one of $a_j$ and $a$ is in $A_i$, and moreover $|A_i \cap (C_{j-1} \cup T)| = t - 1$; the latter condition can be rewritten as $|A_i \cap T| = t - 1 - |A_i \cap C_{j-1}|$. 
    If $a_j \in A_i$, the difference is $-1$, while if $a \in A_i$, the difference is $1$.
    
    Fix a voter $i$ such that exactly one of $a_j$ and $a$ is contained in $A_i$, and consider the number of sets $T \subseteq R$ of size $k-j$ such that $|A_i \cap T| = t - 1 - |A_i \cap C_{j-1}|$. We claim that it is $\binom{m-j-1}{k-j}$ times $h(m - j - 1,\, |A_i \setminus C_{j-1}| - 1,\, k - j;\, t - 1 - |A_i \cap C_{j-1}|)$. 
    Indeed, it is exactly the probability of choosing a random set $T$ of size $k - j$ from a set $R$ of size $m - j - 1$ (recall that $R = M \setminus (C_{j - 1} \cup \set{a_j, a})$) which contains $|A_i \cap R| = |A_i \setminus (C_{j - 1} \cup \set{a_j, a})| = |A_i \setminus C_{j - 1}| - 1$ elements of $A_i$, and ending up with $|A_i\cap T| = t - 1 - |A_i \cap C_{j - 1}|$  such elements. 
    Putting this together, we have that as long as exactly one of $a_j$ and $a$ is in $A_i$, the entire inner sum is $\binom{m-j-1}{k-j}$ times
    \begin{equation}\label{eq:indicators-2}
    	(\mathbb{I}[a \in A_i] - \mathbb{I}[a_j \in A_i]) \cdot h(m - j - 1,\, |A_i \setminus C_{j-1}| - 1,\, k - j;\, t - 1 - |A_i \cap C_{j-1}|).
    \end{equation}
    In fact, \eqref{eq:indicators-2} continues to hold even when not exactly one of $a_j$ and $a$ is in $A_i$, as the difference in indicators is simply $0$ in that case. 
    Hence, \eqref{eq:indicators} is $\binom{m-j-1}{k-j}$ times
    \[
    	\sum_i \left((\mathbb{I}[a \in A_i] - \mathbb{I}[a_j \in A_i])\cdot \frac{h(m - j - 1,\, |A_i \setminus C_{j-1}| - 1,\, k - j;\, t - 1 - |A_i \cap C_{j-1}|)}{H(m, |A_i|, k; t-1)}\right).
    \]
    Note that the probability mass function and cumulative distribution function ratio is exactly the weight $w_i$ given to voter $i$ in the round where $a_j$ is chosen in \Cref{alg:sequential}.
    Rewriting the expression above, we see that this is equal to
    \[
    	\sum_{i: a \in A_i} w_i - \sum_{i: a_j \in A_i} w_i.
    \]
    Since $a_j$ is a candidate with the highest weighted approval when it is chosen, this difference is at most $0$, as desired. \qed

\subsection{Proof of \Cref{thm:impossibility-simultaneous-t}}

Consider an instance with $m = 2k$ candidates partitioned into two sets $C_1$ and $C_2$, each of size $k$. 
Suppose that half of the voters approve only $C_1$, and the other half approve only $C_2$. 
Note that all voters are $1/2$-flexible.

Fix any rule $R$. If $R$ is FVR-optimal for $(1/2, k)$, the chosen committee must be $k$-approved by a nonzero fraction of voters, i.e., it must consist entirely of either $C_1$ or $C_2$. 
However, this means that $R$ cannot be FVR-optimal for $(1/2, t)$. 
Indeed, the other half of the voters do not $t$-approve this committee (and are $1/2$-flexible), so $\text{FVR}(R, 1/2, t, m) \ge 1/2$.
On the other hand, the optimal value is $H(2k, k, k; t - 1) < 1/2$. 
Indeed, it is precisely the probability that these $1/2$-flexible voters do not $t$-approve a randomly chosen committee of size $k$. 
Since $t \le \lfloor k / 2 \rfloor$, by symmetry, this probability is strictly less than $1/2$. 
\qed

\subsection{Proof of \Cref{thm:jr-impossibility}}
    Take some $m > k$. 
    Suppose that the candidates are partitioned into two sets $B \cup C$, where $B = \set{b_1, \ldots, b_{k-1}}$ and $C = \set{c_1, \ldots, c_{m - k + 1}}$. 
    There are $k$ groups of voters, $N_1, \ldots, N_k$, each of size $m - k + 1$.
    For $1\le j\le k-1$, each voter in $N_j$ approves $b_j$ and no other candidate.
    Each voter in $N_k$ approves all candidates in $C$ except one, a distinct one per voter. 
       
    Observe that for a committee $W$ to satisfy JR, we must have $B \subseteq W$. 
    Indeed, if not, there is some $j$ such that $b_j \notin W$, and the set $N_j$ forms a blocking coalition. 
    However, once $B \subseteq W$, we have $|W \cap C| = 1$. 
    Hence, there exists a voter in $N_k$ who disapproves the unique candidate from $C$ in $W$, and therefore disapproves the entire committee. 
    This voter is $\frac{m - k}{m}$-flexible and comprises a $\frac{1}{k \cdot (m - k + 1)} \ge \frac{1}{km}$ fraction of the voters. 
    Note that the optimal \FVR guarantee for $s = \frac{m-k}{m}$ and $t = 1$ is $H(m, m-k, k; 0) = \frac{1}{\binom{m}{k}}$. 
    For sufficiently large $m$, we have $\binom{m}{k} > km$, so the optimal \FVR guarantee for $(s,t) = (\frac{m-k}{m},1)$ is incompatible with JR.
\qed

\section{Proportional Veto Core}
\label{app:PVC}

In this appendix, we define a stronger version of proportional veto core in the same spirit as our FVR notion, and show that it can sometimes be empty.

\begin{definition}
Given $n$ voters with strict rankings over $m$ candidates, a candidate $a$ is \emph{weakly vetoed} by a group $N'$ of voters if each voter in $N'$ prefers at least $m - \lceil m\cdot |N'|/n\rceil + 1$ candidates to $a$.

The \emph{strong proportional veto core} consists of all candidates that are not weakly vetoed by any group of voters.
\end{definition}

Note that in the original proportional veto core \citep{Moul81}, a candidate $a$ is vetoed by a group of voters $N'$ if all voters in $N'$ prefer the \emph{same} set of $m - \lceil m\cdot |N'|/n\rceil + 1$ candidates to $a$.

\begin{proposition}
The strong proportional veto core can be empty.
\end{proposition}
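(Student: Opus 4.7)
The plan is to produce a small explicit counterexample that matches the one hinted at in the introduction and footnote, namely $n = 2$ voters and $m = 3$ candidates, with voter~1 ranking $a \succ b \succ c$ and voter~2 ranking $c \succ b \succ a$. I would organize the verification by iterating over all nonempty subsets $N' \subseteq \{1,2\}$, compute for each the threshold $m - \lceil m|N'|/n\rceil + 1$, and exhibit, for each of the three candidates, a group that weakly vetoes it.

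First I would handle the singleton groups. With $|N'| = 1$, the threshold is $3 - \lceil 3/2\rceil + 1 = 2$, so a single voter weakly vetoes precisely her bottom-ranked candidate. Hence $\{1\}$ weakly vetoes $c$ and $\{2\}$ weakly vetoes $a$. Next I would handle the full electorate $N' = \{1,2\}$: the threshold becomes $3 - \lceil 3\rceil + 1 = 1$, so $\{1,2\}$ weakly vetoes any candidate that is not ranked first by either voter. Since $b$ is ranked second by both voters, each of them prefers exactly one candidate to $b$, satisfying the threshold with equality. Therefore $\{1,2\}$ weakly vetoes $b$. Combining the three cases, every candidate is weakly vetoed, so the strong proportional veto core is empty.

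I do not anticipate any real obstacle; the proposition is established by pure verification. The only points requiring a small amount of care are getting the ceiling arithmetic right (in particular, the fact that for $|N'| = n$ the threshold is exactly $1$, so even a candidate ranked strictly above the bottom can be vetoed by the full group) and confirming that this strengthening is indeed strictly stronger than Moulin's original definition---under the original definition, $b$ is not vetoed by $\{1,2\}$ because the two voters do not agree on a common preferred candidate, consistent with the footnote. If desired, I could also briefly remark that the example generalizes: for any odd $m$ and $n = 2$ with opposite rankings, the middle candidate is weakly vetoed by the full electorate while the extreme candidates are weakly vetoed by singletons, yielding an empty core whenever $m \ge 3$.
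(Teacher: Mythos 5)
Your proposal is correct and uses exactly the same counterexample and verification as the paper: two voters with opposite rankings over three candidates, singletons vetoing the bottom-ranked candidates (threshold $2$) and the full electorate vetoing $b$ (threshold $1$). The ceiling arithmetic checks out, so nothing further is needed.
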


\begin{proof}
Consider $m = 3$ candidates and $n = 2$ voters with rankings $a\succ b\succ c$ and $c\succ b\succ a$.
\begin{itemize}
\item Candidate $c$ is vetoed by voter~$1$, since the voter prefers $3 - \lceil 3\cdot 1/2\rceil + 1 = 2$ candidates to $c$.
\item Similarly, candidate $a$ is vetoed by voter~$2$.
\item Candidate $b$ is vetoed by both voters together, since each voter prefers $3 - \lceil 3\cdot 2/2\rceil + 1 = 1$ candidate to $b$.
\end{itemize}
Hence, the strong proportional veto core is empty in this instance.
\end{proof}

\end{document}